\newtheorem{theorem}{Theorem}
\newtheorem{corollary}{Corollary}
\newtheorem{lemma}{Lemma}
\newtheorem{observation}{Observation}
\newtheorem{remark}{Remark}
\newtheorem{definition}{Definition}
\newtheorem{example}{Example}
\newtheorem{claim}{Claim}
\newtheorem*{claim*}{Claim}
\newenvironment{claimproof}{\noindent{\em Proof of the claim.}}{\hfill $\diamond$ \par\medskip}
\newcommand{\PWS}{\textnormal{\textsc{Pinwheel Scheduling}}}
\newcommand{\TPWS}{\textnormal{\textsc{Threshold Pinwheel Scheduling}}}
\newcommand{\MVPWS}{\textnormal{\textsc{Maximum-Visit Pinwheel Scheduling}}}
\newcommand{\PM}{\textnormal{\textsc{Position Matching}}}
\newcommand{\tP}{\textnormal{3\textsc{-Partition}}}
\newcommand{\NTDM}{\textnormal{\textsc{Numerical}\ 3\textsc{-Dimensional Matching}}}
\newcommand{\NTDMshort}{\textnormal{\textsc{N3DM}}}
\newcommand{\INTDM}{\textnormal{\textsc{Inequality Numerical}\ 3\textsc{-Dimensional Matching}}}
\newcommand{\INTDMshort}{\textnormal{\textsc{IN3DM}}}
\newcommand{\RNTDM}{\textnormal{\textsc{Restricted Numerical}\ 3\textsc{-Dimensional Matching}}}
\newcommand{\RNTDMshort}{\textnormal{\textsc{RN3DM}}}
\newcommand{\kV}{\textnormal{\ensuremath{k}\textsc{-Visits}}}
\newcommand{\varkV}{\textnormal{\textsc{Var-}\ensuremath{k}\textsc{-Visits}}}
\newcommand{\oneV}{\textnormal{1\textsc{-Visit}}}
\newcommand{\twoV}{\textnormal{2\textsc{-Visits}}}
\def\bigO{\ensuremath{\mathcal{O}}\xspace}
\newcommand{\equalcontribsymbol}{*}
\newcommand{\equalcontribfootnote}{
  \renewcommand{\thefootnote}{\equalcontribsymbol}%
  \footnotetext{Equal contribution}%
  \renewcommand{\thefootnote}{\arabic{footnote}}%
}
\begin{document}
\title{Finite Pinwheel Scheduling: the k-Visits Problem}



\author[\equalcontribsymbol,1,2]{Sotiris Kanellopoulos \orcidlink{0009-0006-2999-0580} }
\author[\equalcontribsymbol,1,2]{Christos Pergaminelis \orcidlink{0009-0009-8981-3676} }
\author[3]{Maria Kokkou \orcidlink{0009-0009-8892-3494} } 

\author[2,4]{\\ Euripides Markou }
\author[1,2]{Aris Pagourtzis \orcidlink{0000-0002-6220-3722} }
\affil[1]{National Technical University of Athens, Greece}
\affil[2]{Archimedes, Athena Research Center, Greece}
\affil[3]{Paderborn University, Paderborn, Germany}
\affil[4]{University of Ioannina, Greece}

\affil[ ]{\texttt{\{s.kanellopoulos,chr.pergaminelis\}@athenarc.gr, maria.kokkou@uni-paderborn.de, emarkou@cse.uoi.gr, pagour@cs.ntua.gr}}

\date{}

\maketitle
\equalcontribfootnote
\begin{abstract}

\textsc{Pinwheel Scheduling} is a fundamental scheduling problem, in which each task $i$ is associated with a positive integer \emph{deadline} $d_i$, and the objective is to schedule one task per time slot, ensuring each task $i$ perpetually appears at least once in every $d_i$ time slots. Although conjectured to be PSPACE-complete, the complexity of \textsc{Pinwheel Scheduling} remains open. 

We introduce \textsc{$k$-Visits}, a finite version of \textsc{Pinwheel Scheduling}, where given the deadlines of $n$ tasks, the goal is to schedule each task exactly $k$ times. While we observe that the \textsc{$1$-Visit} problem is trivial, we prove that \textsc{$2$-Visits} is strongly NP-complete through a reduction from \textsc{Numerical 3-Dimensional Matching}. 
We further extend our strong NP-hardness result to a generalization of \textsc{$k$-Visits} ($k\geq 2$) in which the deadline of each task may vary throughout the schedule, as well as to a similar generalization of \textsc{Pinwheel Scheduling}, thus making progress towards settling the complexity of the latter.

Additionally, we prove that \textsc{$2$-Visits} can be solved in linear time if all deadlines are distinct, rendering it one of the rare natural problems which exhibit the interesting dichotomy of being in P if their input is a set and NP-complete if the input is a multiset. 
We also present an FPT algorithm for \textsc{$2$-Visits} parameterized by a value related to how close the input deadlines are to each other, as well as a linear-time algorithm for instances with up to two distinct values of deadlines close to each other.

\end{abstract}

\clearpage

\section{Introduction}

Deadline-based scheduling problems are fundamental in both theory and practice, arising in diverse settings such as manufacturing, communications,  satellite operations and are even present in everyday tasks. 
For example, imagine a single delivery truck tasked with supplying several stores, each with diverse restocking needs. The challenge is to figure out a schedule that ensures all stores are resupplied before running out of stock. How can the truck manage this, given the varying demands of the stores?

The \textsc{Pinwheel Scheduling} problem~\cite{Holte_Pinwheel} (also known as \emph{Windows Scheduling} with one channel~\cite{Bar-Noy_Windows,Bar-Noy_0.6}, or \emph{Periodic Scheduling}, e.g.~\cite{Bar-Noy_Periodic}) formalizes such questions in perpetual settings as follows. Given a multiset $D=\{d_1,\ldots,d_n\}$ of positive integers (deadlines), the goal is to determine whether there exists an infinite sequence over $\{1,\ldots,n\}$ such that any subsequence of $d_i$ consecutive entries contains at least one instantiation of~$i$. Despite its simple description, this problem is still associated with many open questions, with its complexity in particular remaining open for over three decades. Although conjectured to be PSPACE-complete, NP-hardness had until recently only been proven when the multiplicity of each deadline is given as input (compact encoding); for the case in which the input is given explicitly as a multiset, NP-hardness was proven very recently by Kleinberg and Mishra~\cite{kleinberg_mishra_NP_hardness}. See Table~\ref{tab:comparison} for a comprehensive list of known complexity results. Note that instances with sum of inverse deadlines (\emph{density}) greater than $1$ do not admit a feasible schedule. In a recent breakthrough, Kawamura [STOC 2024]~\cite{Kawamura_5/6_stoc} also proved that instances with density not exceeding~$5/6$ always admit a feasible schedule, which was a long-standing conjecture proposed by Chan and Chin~\cite{Chan_conjecture}.

\newcommand{\lightrule}{\arrayrulecolor{black!30}\hline\arrayrulecolor{black}}
{
\begin{table}[ht]
\centering
\begin{tabular}{|l|l!{\color{black!30}\vrule}l|}
\hline
 & \PWS & \twoV\ [this work] \\
\hline
Feasible when density $\leq 5/6$      & Always~\cite{Chan_conjecture,Kawamura_5/6_stoc} & Always\\
\hline
Feasible when density $>1$ & Never~\cite{Holte_Pinwheel} & Sometimes \\
\hline

Complexity (explicit input) & In PSPACE~\cite{Holte_Pinwheel} & Strongly NP-complete \\ 
\arrayrulecolor{black!30}\cline{2-3}\arrayrulecolor{black}
& Conjectured PSPACE-complete~\cite{Bosman_Replenishment} & In P for distinct deadlines \\ 
\arrayrulecolor{black!30}\cline{2-2}\arrayrulecolor{black}
& Weakly NP-hard~\cite{kleinberg_mishra_NP_hardness} & \\ 
\arrayrulecolor{black!30}\cline{2-2}\arrayrulecolor{black}
& Unknown whether in NP & \\ 
\arrayrulecolor{black!30}\cline{2-2}\arrayrulecolor{black}
& (Pseudo)poly-time unlikely~\cite{Jacobs_Window_Scheduling_Complexity,Pinwheel_ISAAC_2025}  & \\ 
\arrayrulecolor{black!30}\cline{2-2}\arrayrulecolor{black}
& In NP when Density $=1$ \cite{Holte_Pinwheel} & \\
\hline

Complexity (compact input) & Weakly NP-hard~\cite{Holte_Pinwheel, Bar-Noy_0.6} \footnotemark & Strongly NP-complete  \\
\arrayrulecolor{black!30}\cline{2-2}\arrayrulecolor{black}
& Unknown whether in NP & \\ 
\hline

Tractable special cases &  Two distinct numbers~\cite{Holte_2_distinct} & Two distinct numbers \\
& & (per cluster) \\
\arrayrulecolor{black!30}\cline{3-3}\arrayrulecolor{black}
& 
& All deadlines distinct \\ 
\arrayrulecolor{black!30}\cline{3-3}\arrayrulecolor{black}
& & Constant cluster size \\
\hline

\end{tabular}
\caption{Our results for $\twoV$ compared to their $\PWS$ counterparts.}
\label{tab:comparison}
\end{table}
}




\footnotetext[1]{The NP-hardness proof of Holte et al.~\cite{Holte_Pinwheel} for compact encoding does not seem to have been published; nevertheless, a proof by Bar-Noy et al. can be found in~\cite{Bar-Noy_0.6}.}


In this work, we introduce a finite variant of \textsc{Pinwheel Scheduling}, which we call \textsc{$k$-Visits}. The motivation for this variant is twofold: first, certain deadline-based applications may only require a finite amount of task repetitions; second, the study of this variant may be useful as an intermediate step for settling the complexity of \textsc{Pinwheel Scheduling}. Notably, the proof of PSPACE-membership by Holte et al.~\cite{Holte_Pinwheel} implies that the existence of an infinite schedule is equivalent to the existence of a (finite) schedule of length equal to the product of the input deadlines, strengthening the connection between finite and infinite variants of the problem.

We find that the simplest version of \textsc{$k$-Visits} in which $k = 1$ is trivial, but the $k=2$ version turns out to be a challenging combinatorial problem. Hence, the main focus of this work is the $k=2$ version, i.e., the \textsc{$2$-Visits} problem, for which we prove strong NP-completeness and examine tractable special cases. Notably, this strong NP-hardness result transfers to a generalization of \textsc{Pinwheel Scheduling} in which the deadline of each task may vary throughout the schedule. This is of particular interest as, to the best of our knowledge, no such result exists for \textsc{Pinwheel Scheduling}. We summarize the main differences in what is known for the two problems in Table~\ref{tab:comparison}.

We remark that our hardness result may prove useful in the future for settling the PSPACE-completeness of \PWS, since PSPACE-hardness proofs for periodic problems often rely on modifying the NP-hardness proof of some finite version (e.g.,~\cite{Los_Alamos_periodic_PSPACE,Papadimitriou_book}).




\subsection{Related Work}

Holte, Mok, Rosier, Tulchinsky and Varvel~\cite{Holte_Pinwheel} introduced the \textsc{Pinwheel Scheduling} problem in 1989, proving that it is contained in PSPACE by showing that an infinite schedule exists if and only if an infinite schedule with finite period exists. In the same paper, the authors introduce the concept of \emph{density} and prove membership in NP when restricted to instances with density equal to~$1$, while also claiming NP-hardness for a compact input (see~\cite{Bar-Noy_0.6} for a proof). Since the problem's introduction, it has remained open whether it is contained in NP or is NP-hard (for explicit input), despite the conjecture of PSPACE-completeness~\cite{Bosman_Replenishment}. Regardless, it has been used to classify the complexity of certain inventory routing problems~\cite{inventory_routing}, thus further enhancing the importance of the aforementioned open questions. Complexity results for \PWS\ include~\cite{Jacobs_Window_Scheduling_Complexity,Pinwheel_ISAAC_2025}, which show that the problem admits no (pseudo)polynomial-time algorithms under standard complexity assumptions, and~\cite{PSPACE_UAV}, which shows PSPACE-completeness for a weighted generalization of the problem. After the initial version of this paper was published, a breakthrough paper by Kleinberg and Mishra~\cite{kleinberg_mishra_NP_hardness} finally showed weak NP-hardness for \textsc{Pinwheel Scheduling}.

Since \textsc{Pinwheel Scheduling} has been conjectured to be hard, there have been various attempts to determine easy special cases. Variants with constant amounts of distinct deadlines have been studied in~\cite{Holte_Pinwheel,Holte_2_distinct,Lin_3distinct}. Additionally, there exists a substantial line of research proving that feasible schedules always exist when the density of the input is bounded by some constant. Holte et al.~\cite{Holte_Pinwheel} gave the first such bound of $0.5$; Bar-Noy, Ladner and Tamir~\cite{Bar-Noy_0.6}, Chan and Chin~\cite{Chan_0.7,Chan_conjecture}, Fishburn and Lagarias~\cite{Fishburn_density} all improved this bound, with~\cite{Chan_conjecture} conjecturing that the bound can be improved up to~$5/6$.\footnote{Note that there exists an instance with density~$5/6 +\varepsilon$ that admits no feasible schedule (cf.~\cite{Gasieniec_towards_5/6}).} Gasieniec, Smith and Wild~\cite{Gasieniec_towards_5/6} proved the $5/6$-conjecture for instances of size at most~$12$. In a recent breakthrough paper, Kawamura~\cite{Kawamura_5/6_stoc} finally proved the conjecture in the general setting.

There is also extensive bibliography for variations and generalizations of \textsc{Pinwheel Scheduling}. For example, \cite{SOFSEM_real_periods} studies a variant with real periods, \cite{Feinberg_Generalized_Pinwheel,SOFSEM_Kusano_pinwheel_durations} a variant in which each task additionally has a duration, and~\cite{Kawamura_pinwheel_cover} a variant in which each task has to be executed
\emph{at most} once in a specified number of time units. \emph{Bamboo Garden Trimming} (BGT)~\cite{Bamboo_first} is a closely related optimization problem (see also~\cite{Bamboo_second,Bamboo_approx_2,Bamboo_approx_1}), in which one bamboo is trimmed at a time in a garden of bamboos growing with different rates and the objective is to minimize the maximum height. The best constant approximation algorithm for BGT is given by Mishra~\cite{Patrolling_SODA_2026}, achieving an approximation ratio of $9/7$; however, Kleinberg and Mishra~\cite{kleinberg_mishra_NP_hardness} recently also showed that BGT admits a PTAS. Generalizations of BGT are studied in~\cite{Biktairov_Polyamorous,Schewior_Combinatorial_Perpetual_Scheduling_ICALP}.

In addition to the scheduling-based approach, another line of research addresses the problem through a graph-theoretic perspective, where the objective is to visit every node of a graph with a certain frequency. The most basic form of this problem in which each node must be visited once is known as \emph{Exploration} \cite{shannon1993presentation} 
(for a survey see \cite{das2019graph}). An extension of the Exploration problem, where each node must be visited within specific time constraints, is studied in \cite{czyzowicz2018exploring}. In the case where each node of the graph needs to be repeatedly visited, the problem is known as \emph{Perpetual Exploration} (e.g., \cite{blin2010exclusive}) or \emph{Patrolling} (for a recent survey see \cite{Czyzowicz_Patrolling}). 
The most relevant variation of Patrolling to our work is known as \emph{Patrolling with Unbalanced Frequencies} \cite{puf2018}, in which the objective is to satisfy constraints on the maximum allowable time between successive visits to the same node. However, results in that context focus on continuous lines (e.g., \cite{puf2018,damaschke2020two}) and on multiple agents (e.g., \cite{damaschke2020two}), whereas our setting is analogous to complete graphs and a single agent. 

The finite versions of \PWS\ defined here have been further studied in~\cite{ICALP_kVisits}, which studies density thresholds and establishes additional hardness and tractability results, and~\cite{finite_covering}, which defines analogous finite versions for \textsc{Pinwheel Covering} (cf.~\cite{Kawamura_pinwheel_cover}).



\subsection{Our Contribution}

We introduce a finite variant of the classical \textsc{Pinwheel Scheduling} problem, which we call \textsc{$k$-Visits}. This problem can be modeled by an agent traversing a complete graph with loops, with each node having a deadline and the goal being to visit each node exactly $k$ times without its deadline expiring between consecutive visits. While we observe that \textsc{$1$-Visit} is solvable by a trivial algorithm, we find that \textsc{$2$-Visits} is a challenging combinatorial problem. In particular, our main result establishes that \textsc{$2$-Visits} is strongly NP-complete, even when deadlines are given explicitly. This is in contrast to the current status of \textsc{Pinwheel Scheduling}, which is not known to be strongly NP-hard under any input encoding. Our main result implies strong NP-hardness for a generalization of \textsc{Pinwheel Scheduling} in which the deadline of each task may change once throughout the schedule, as well as for an optimization variant seeking to maximize the number of executions of each task. We thus make significant progress towards understanding the computational complexity of \textsc{Pinwheel Scheduling} and other deadline-based scheduling problems. All hardness results in this paper concern explicit input encoding and, thus, also hold for compact encoding.

\begin{figure}[ht]
    \centering
    \includegraphics[scale=.9]{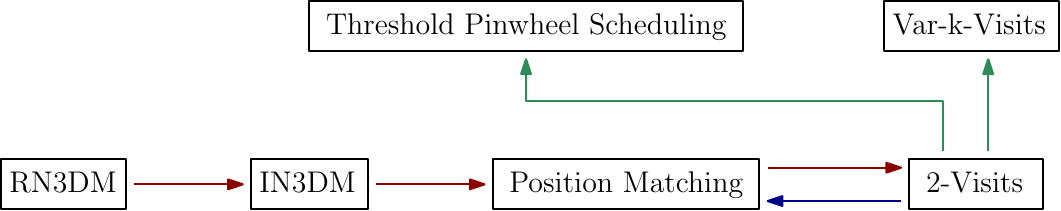}
    \caption{A map of our reductions. $\RNTDMshort$ (Def.~\ref{def:rn3dm}) and $\INTDMshort$ (Def.~\ref{def:numerical_ineq}) are variants of $\NTDM$. Reductions denoted by red (left to right) arrows are used to prove strong NP-completeness for $\twoV$ (Section~\ref{sec:two_visits_hardness}). The blue-arrow (right to left) reduction is a linear time Turing reduction from $\twoV$ to $\PM$, and is the backbone of the algorithms we propose for \textsc{$2$-Visits} (Section~\ref{sec:two_visits_algo}). Lastly, the green-arrow (bottom-up) reductions prove strong NP-hardness for generalizations of \textsc{$k$-Visits} and \textsc{Pinwheel Scheduling} (Section~\ref{sec:hardness_extra}).}
    \label{fig:reductions}
\end{figure}

A more detailed description of our hardness results follows.
We prove strong NP-completeness for \textsc{$2$-Visits} via a surprising chain of reductions starting from a variation of \NTDM\ ($\NTDMshort$) defined in~\cite{Flow_shop_Yu}, and then transfer this hardness result to generalizations of \textsc{$k$-Visits} ($k\geq 2$) and \textsc{Pinwheel Scheduling}. As intermediate steps in our reductions, we define NP-complete variations of $\NTDMshort$ which may be of independent interest. In particular, we introduce a severely restricted variation of $\NTDMshort$, which we call \textsc{Position Matching}, serving as a key component in our results. Our reductions are summarized in Figure \ref{fig:reductions}. A major technical challenge stems from the  constraints of \textsc{Position Matching} and particularly from the concept of \emph{discretized sequences} (see Def.~\ref{def:disc}) which is fundamental for our definitions.

Despite the intractability of \textsc{$2$-Visits}, we identify tractable special cases through a linear-time Turing reduction from \textsc{$2$-Visits} to \textsc{Position Matching}. Our first result is a linear-time algorithm when all deadlines are distinct. This leads to an interesting dichotomy: our hardness result for \textsc{$2$-Visits} holds when the input is a \emph{multiset}, but the problem can be solved in linear time when the input is a simple set. This sensitivity to input multiplicity is rare for natural problems and highlights an interesting structural boundary in the complexity landscape of deadline-based problems. Additionally, we provide an FPT algorithm parameterized by the \emph{maximum cluster size} (see Definitions~\ref{def:disc},~\ref{def:cluster}); intuitively, this is a parameter related to how close the input deadlines are to each other. Lastly, we show a linear-time algorithm when there are at most two distinct deadlines per \emph{cluster} (Def.~\ref{def:cluster}).

Our paper is structured as follows. In Section~\ref{sec:one_visit}, we give a brief overview of the \textsc{$1$-Visit} problem. In Section~\ref{sec:two_visits_algo} we study various properties of \textsc{$2$-Visits} schedules, leading to special-case polynomial-time algorithms for the problem. In Section~\ref{sec:two_visits_hardness}, we prove strong NP-completeness for \textsc{$2$-Visits}, which we consider our main technical contribution. Finally, in Section~\ref{sec:hardness_extra} we extend our hardness result to generalizations of \textsc{$k$-Visits} ($k\geq 2$) and of \textsc{Pinwheel Scheduling}. We stress that Section~\ref{sec:two_visits_algo} (Theorems~\ref{theorem:second_visits} and~\ref{theorem:reduction} in particular) provides intuition for the connection between \textsc{$2$-Visits} and \textsc{Position Matching}, which is necessary to understand our main result in Section~\ref{sec:two_visits_hardness}.

\subsubsection*{Prior Work}

This work is an extension of the paper with the same title presented in SODA 2026~\cite{kVisits}. Compared to the conference version, this paper contains detailed proofs for all statements (when required) and many minor revisions and corrections. In particular, the result in Section~\ref{subsec:two_distinct} has been improved compared to the conference version: we prove that \twoV\ is, in fact, tractable when at most two distinct deadlines correspond to each cluster, instead of demanding that the whole input consists of two distinct deadlines. We also expand our results in Section~\ref{sec:hardness_extra}, providing additional proofs and a strong NP-hardness corollary for an optimization version of \PWS.
Lastly, this version includes some related work that we were unaware of or that was unpublished at the time of the SODA 2026 submission.



\section{Preliminaries and problem definition}

\subsection{Preliminaries}

We use the notations $[n]=\{1,\dots, n\}$ and $[m,n] = \{m,\ldots,n\}$ for positive integers $m\leq n$.


\begin{definition}[$\PWS$ \cite{Holte_Pinwheel}]
    Given a non-decreasing sequence of $n$ positive integers (deadlines) $d_i$, $i\in [n]$, the $\PWS$ problem asks whether there exists an infinite schedule $p_1, p_2, \ldots$ , where $p_j \in [n]$ for $j\in \mathbb{N}$, such that for all $i\in [n]$ any $d_i$ consecutive entries contain at least one occurrence of $i$.
\end{definition}

\begin{definition}[Density]
    For a non-decreasing sequence of $n$ positive integers (deadlines) $d_i$, $i\in [n]$, we define its \emph{density} as $\sum_{i=1}^n 1/d_i$.
\end{definition}


\begin{definition}[$\NTDMshort$]
    Given three multisets of positive integers $A=\{a_1,\ldots,a_n\}$, $B=\{b_1,\ldots,b_n\}$, $C=\{c_1,\ldots,c_n\}$ and an integer $\sigma$ such that
    $$\sum_{i=1}^n (a_i+b_i+c_i) = n\sigma,$$
    the $\NTDM$ $(\NTDMshort)$ problem asks whether there is a subset $M$ of $A \times B \times C$ s.t. every $a_i \in A$, $b_i \in B$, $c_i \in C$ occurs exactly once in $M$ and for every triplet $(a,b,c)\in M$ it holds that $a+b+c=\sigma$.
\end{definition}

$\NTDMshort$ is a variation of the classic strongly NP-complete problem $\tP$ and is also strongly NP-complete~\cite{garey2002computers}. We will use variations of $\NTDMshort$ for our reductions.

\subsection{The \kV\ problem}

\begin{definition}[$\kV$]\label{def:kV}
    Given a non-decreasing sequence of $n$ positive integers (deadlines) $d_i$, $i\in [n]$, the $\kV$ problem asks whether there exists a schedule of length $nk$, containing each $i \in [n]$ exactly $k$ times, with the constraint that every occurrence of $i$ is at most $d_i$ positions away from the previous one (or from the beginning of the schedule, if it is the first one).\footnote{By ``beginning of the schedule'' we refer to position $0$. We assume that the $nk$ visits of the nodes occur from position $1$ to position $nk$.}
\end{definition}

The $\kV$ problem can be modeled by an agent traversing a complete graph with loops where all edges have weight $1$ and each node $i$ has a deadline $d_i$. In each time unit, the agent can traverse an edge and renew the deadline of the node it visits, with the goal being to visit each node a total of $k$ times without its deadline expiring. With this in mind, we will refer to the numbers $i \in [n]$ as \emph{nodes} throughout the paper. 

\begin{observation}\label{obs:density}
    If a sequence of $n$ positive integers does not have a feasible schedule for the $\kV$ problem, then it does not have a feasible schedule for any $k'\textnormal{\textsc{-Visits}}$ problem with $k' > k$. Moreover, it has no feasible schedule for $\PWS$, which can (informally) be seen as the $\infty\textnormal{\textsc{-Visits}}$ problem.
\end{observation}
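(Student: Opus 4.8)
The plan is to prove the contrapositive of both claims with a single construction: I will show that any feasible schedule for a problem demanding \emph{more} visits (either $k'\textnormal{\textsc{-Visits}}$ with $k'>k$, or $\PWS$, which plays the role of the limiting case) can be \emph{truncated} into a feasible $\kV$ schedule. Infeasibility of $\kV$ then propagates upward to every $k'>k$ and to $\PWS$.

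For the first claim, I would start from a feasible $k'\textnormal{\textsc{-Visits}}$ schedule $p_1,\ldots,p_{nk'}$ with $k'>k$. For each node $i\in[n]$, let its occurrences lie at positions $t_{i,1}<\cdots<t_{i,k'}$ and mark only the first $k$ of them. Deleting all unmarked positions and relabelling the surviving $nk$ positions as $1,\ldots,nk$ in their original order yields a candidate $\kV$ schedule containing each node exactly $k$ times. The observation that makes this work is that deletion can only \emph{shrink} gaps: between two consecutive marked occurrences of $i$, the gap in the compressed schedule equals one plus the number of \emph{marked} entries strictly between them, which is at most one plus the number of \emph{all} entries strictly between them — i.e.\ the original gap, which is at most $d_i$ by feasibility. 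The same bound applies to the distance of the first marked occurrence of $i$ from position $0$. Hence every gap constraint still holds and the truncated schedule is feasible for $\kV$.

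For the $\PWS$ claim I would first note that any infinite feasible $\PWS$ schedule automatically satisfies the same per-node gap conditions as a $\kV$ schedule: applying the window property to the first $d_i$ entries shows that $i$ first appears within distance $d_i$ of the start, and applying it to the $d_i$ entries following any occurrence of $i$ shows that consecutive occurrences of $i$ are at most $d_i$ apart. In particular each node occurs at least $k$ times within any sufficiently long prefix, so the identical mark-delete-compress argument again produces a feasible $\kV$ schedule.

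I do not expect a substantive obstacle here; the entire content is the gap-monotonicity observation under deletion. The only care required is the bookkeeping that every gap (and the first-occurrence bound) is preserved — indeed tightened — by compression, and the elementary verification that the $\PWS$ window condition implies exactly the consecutive-gap and first-occurrence bounds used by $\kV$.
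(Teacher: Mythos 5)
Your proof is correct. The paper states this result as an Observation and gives no proof at all, treating it as self-evident; your mark--delete--compress argument, with the key point that deletion can only shrink gaps (and the derivation that the $\PWS$ window condition implies the first-occurrence and consecutive-gap bounds of $\kV$), is precisely the natural formalization of the truncation reasoning the paper leaves implicit.
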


\begin{corollary}
    A $\kV$ instance with density not exceeding $5/6$ admits a feasible schedule, due to the density threshold conjecture proven for $\PWS$ in \cite{Kawamura_5/6_stoc}.
\end{corollary}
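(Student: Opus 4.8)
The plan is to obtain the schedule by composing two facts that are already available: the density threshold established for $\PWS$ by Kawamura~\cite{Kawamura_5/6_stoc}, and the contrapositive of Observation~\ref{obs:density}. First I would note that a $\kV$ instance and a $\PWS$ instance are described by exactly the same data — a non-decreasing sequence of deadlines $d_1,\ldots,d_n$ — and that the density $\sum_{i=1}^n 1/d_i$ is defined identically in both settings. Consequently a $\kV$ instance of density at most $5/6$ is literally a $\PWS$ instance of density at most $5/6$, and by Kawamura's theorem it admits a feasible (infinite) $\PWS$ schedule.

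Next I would read Observation~\ref{obs:density} in its contrapositive form: if a sequence admits a feasible $\PWS$ schedule, then it admits a feasible $\kV$ schedule for every $k$. Indeed, Observation~\ref{obs:density} asserts that infeasibility of $\kV$ forces infeasibility of $\PWS$, and we have just exhibited a feasible $\PWS$ schedule for our instance. Chaining the two implications yields a feasible $\kV$ schedule, which is exactly the claim.

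The single substantive ingredient is therefore hidden inside Observation~\ref{obs:density}, and if I had to re-prove the relevant direction the argument I expect to use is a compression of the infinite $\PWS$ schedule: retain only the first $k$ occurrences of each node and shift the surviving entries leftward to close the resulting gaps. Deleting entries and compressing can only shrink the distance between two consecutive surviving occurrences of a node, as well as the distance from position $0$ to its first occurrence, so every deadline constraint that held in the $\PWS$ schedule continues to hold; meanwhile the compressed schedule has length exactly $nk$ and contains each node exactly $k$ times. Checking that compression never \emph{increases} any of these gaps is the only place that requires care, and it is the step I would flag as the main (if modest) obstacle — everything else is bookkeeping.
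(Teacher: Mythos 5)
Your proof is correct and takes essentially the same route as the paper: the corollary is obtained by combining Kawamura's density-threshold theorem for $\PWS$ with the contrapositive of Observation~\ref{obs:density}. The compression argument you sketch (retain the first $k$ occurrences of each node and shift the surviving entries leftward, noting that deletion and compression can only shrink the relevant gaps) is precisely the justification of Observation~\ref{obs:density} that the paper leaves implicit, so your write-up is, if anything, slightly more complete than the paper's.
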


\begin{observation}
    The converse of Observation~\ref{obs:density} does not hold. Notably, sequences with density exceeding $1$ may have a feasible schedule for $\kV$ (for some positive integer $k$), even though they never have a feasible schedule for $\PWS$~\cite{Holte_Pinwheel}. For example, $\{1,2,\ldots,n\}$ admits a \oneV\ schedule.
\end{observation}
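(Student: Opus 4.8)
The plan is to prove the statement by exhibiting a single explicit instance: the inverse of Observation~\ref{obs:density} would assert that feasibility of \kV\ implies feasibility of \PWS, so a lone density-$>1$ sequence that is \kV-feasible for some $k$ refutes it and simultaneously establishes the concrete claim. First I would recall, from Holte et al.~\cite{Holte_Pinwheel}, why density exceeding $1$ rules out any \PWS\ schedule: a window of $L$ consecutive slots splits into $\lfloor L/d_i\rfloor$ disjoint blocks of $d_i$ consecutive slots, each of which must contain an occurrence of $i$, so the number of forced appearances is at least $\sum_{i=1}^n \lfloor L/d_i\rfloor \geq L\sum_i 1/d_i - n$, which exceeds the $L$ available slots once $L$ is large when $\sum_i 1/d_i > 1$. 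Thus any instance I pick with density above $1$ automatically handles the ``never feasible for \PWS'' half of the claim.

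It then remains to produce a density-$>1$ sequence that is nevertheless feasible for some $k$, and I would take $k=2$ to match the focus of the paper. Concretely, consider the deadlines $d_1 = 2$, $d_2 = 3$, $d_3 = 5$, whose density is $\tfrac12 + \tfrac13 + \tfrac15 = \tfrac{31}{30} > 1$. The schedule $1,1,2,3,2,3$ places node $1$ at positions $1,2$, node $2$ at positions $3,5$, and node $3$ at positions $4,6$; a direct check shows every first visit lies within its deadline of position $0$ and every gap between consecutive visits respects the corresponding deadline, so this is a feasible \twoV\ schedule whose density exceeds $1$.

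The step I expect to be the genuine obstacle is the choice of the instance rather than its verification. Density above $1$ forces the deadlines to be small relative to $n$, and small deadlines are exactly what makes \kV\ hard: each node's first visit must occupy a position at most $d_i$, and a node with a small deadline must squeeze both of its visits into an early window, crowding the initial positions. If the deadlines grow too slowly this congestion makes the instance infeasible (for example $\{2,2,3\}$ and $\{3,3,3,4\}$ have density $>1$ yet admit no \twoV\ schedule, since the early positions cannot simultaneously host all required first and second visits), so the trick is to select deadlines that increase quickly enough to relieve the early-position congestion but slowly enough to keep the density above $1$; the choice $2,3,5$ (or, more symmetrically, $2,4,6,8$) strikes exactly this balance.
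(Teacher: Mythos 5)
Your proposal is correct and takes the only natural route, which is exactly what the paper leaves implicit: the observation is stated without proof, and an explicit density-$>1$ witness is all that is required. Your instance $\langle 2,3,5\rangle$ has density $\tfrac{31}{30}>1$, the schedule $1,1,2,3,2,3$ verifiably satisfies Definition~\ref{def:kV} with $k=2$ (every first visit is within its deadline of position $0$ and every revisit gap is within the deadline), and the \PWS-infeasibility half follows from the counting argument of Holte et al.\ that the paper cites.
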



The following definition will prove crucial for studying the $\kV$ problem.

\begin{definition}[Discretized Sequence]\label{def:disc}
    Given a non-decreasing sequence $D=\langle d_1,\ldots,d_n\rangle$ of positive integers, we define its \emph{discretized sequence} $A=\langle a_1,\ldots,a_n\rangle$ as follows.
    \[a_i = \begin{cases}
        d_i,\ i=n\\
        \min\{a_{i+1}-1, d_i\},\ i<n
    \end{cases}\]
\end{definition}

Intuitively, the discretized sequence of a sequence of deadlines $d_i$ contains the latest possible positions in which the first visits of all $n$ nodes can occur. For example, the discretized sequence of $\langle 6,8,8,8,11,11,14\rangle$ is $\langle 5,6,7,8,10,11,14\rangle$. A discretized sequence can be computed in $\bigO(n)$ time.


Throughout the paper, we may refer to the input of $\kV$ as a \emph{multiset} instead of a sequence, when it is more convenient (or a \emph{set}, if it does not contain duplicates). For $\twoV$ specifically (which is the main focus of this paper), we will use the following definition. 

\begin{definition}[$\twoV$]\label{def:2V}
    Given a non-decreasing sequence of $n$ positive integers (deadlines) $d_i$, $i\in [n]$, the $\twoV$ problem asks whether there exists a schedule of length $2n$, containing a \emph{primary} and a \emph{secondary} visit for each $i \in [n]$. For every $i \in [n]$, its primary visit must be at most $d_i$ positions away from the beginning of the schedule and its secondary visit must be either before its primary visit or at most $d_i$ positions after its primary visit.
\end{definition}

Note that Def.~\ref{def:2V} is equivalent to Def.~\ref{def:kV} for $k=2$ and will significantly simplify many of our proofs by allowing a secondary visit to appear before the respective primary visit (in contrast to using \emph{first} and \emph{second} visits).





\section{The \oneV\ problem}\label{sec:one_visit}

\begin{lemma}\label{lemma:one-visit}
    If a $\oneV$ instance $\langle d_1,d_2,\ldots,d_n\rangle$ has a feasible schedule, then the schedule $1,2,\ldots,n$ is feasible for that instance.
\end{lemma}

\begin{proof}
Let us assume that a feasible schedule $S'$ exists and that the schedule $S = 1,2,\ldots,n$ is not feasible. Let $p$ be the first position where $S$ and $S'$ differ. Let $i$ be the node in position $p$ of $S$, $j$ the node in position $p$ of $S'$ and $p'>p$ be the position of $S'$ where node $i$ is placed. Then, $i$ and $j$ can be swapped in $S'$, as by definition $d_i < d_j$. Iteratively, the same swapping argument holds for every position where $S$ and $S'$ differ and, as a result, $S'$ can be transformed to $S$ without violating a deadline at any step. Therefore $S$ is also a feasible schedule, contradiction.
\end{proof}

\begin{lemma}\label{obs:one-visit}
    A $\oneV$ instance has a feasible schedule if and only if the discretized sequence of the input contains only (strictly) positive elements.
\end{lemma}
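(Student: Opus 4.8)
The plan is to reduce the statement to a clean arithmetic condition using the previous lemma, and then relate that condition to positivity of the discretized sequence through its monotonicity. First I would invoke Lemma~\ref{lemma:one-visit}: a feasible schedule exists if and only if the canonical schedule $1,2,\ldots,n$ is feasible. Since this schedule places node $i$ in position $i$, and that single visit must occur within deadline $d_i$, feasibility is equivalent to the condition $d_i \geq i$ for every $i \in [n]$. This turns the lemma into a comparison of two conditions, namely \emph{($d_i \geq i$ for all $i$)} and \emph{($a_i > 0$ for all $i$)}.

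Next I would record two facts that follow immediately from the recursive definition of the discretized sequence. Since $a_i = \min\{a_{i+1}-1, d_i\}$ for $i<n$, we have $a_i \leq a_{i+1}-1 < a_{i+1}$, so the (integer) discretized sequence is strictly increasing; in particular $a_1 = \min_i a_i$. We also have $a_i \leq d_i$ for every $i$, again directly from the definition. The heart of the argument is then to prove the equivalence of three conditions: (i) $d_i \geq i$ for all $i$; (ii) $a_i \geq i$ for all $i$; (iii) $a_i > 0$ for all $i$. Here (ii)$\Rightarrow$(iii) is immediate since $i \geq 1$; (iii)$\Rightarrow$(ii) follows because strictly increasing positive integers satisfy $a_i \geq a_1 + (i-1) \geq i$; and (ii)$\Rightarrow$(i) follows from $d_i \geq a_i \geq i$. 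Chaining these equivalences with the reduction above yields the lemma.

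The one implication requiring real care, and the main obstacle, is (i)$\Rightarrow$(ii): showing that $d_i \geq i$ for all $i$ forces every entry of the discretized sequence to satisfy $a_i \geq i$ (hence to be positive). I would prove this by downward induction on $i$. The base case is $a_n = d_n \geq n$. For the inductive step, assuming $a_{i+1} \geq i+1$, we get $a_{i+1}-1 \geq i$, and combining with $d_i \geq i$ gives $a_i = \min\{a_{i+1}-1, d_i\} \geq i$. The subtlety is that a naive downward induction directly on the predicate ``$a_i > 0$'' does not close, because $a_i$ is forced to be at most $a_{i+1}-1$ and thus decreases step by step; merely knowing $a_{i+1} > 0$ does not bound $a_i$ away from zero. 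One must strengthen the inductive hypothesis to the quantitative bound $a_i \geq i$ for the step to go through. Once this strengthened invariant is established, the remaining implications are routine and the biconditional follows.
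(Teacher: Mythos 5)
Your proof is correct and follows exactly the route the paper intends: the paper omits the proof, stating only that it ``follows immediately from Lemma~\ref{lemma:one-visit} and Def.~\ref{def:disc},'' and your argument is precisely that reduction, carried out in full via the equivalence of $d_i \geq i$, $a_i \geq i$, and $a_i > 0$. Your observation that the induction must be strengthened from ``$a_i>0$'' to ``$a_i \geq i$'' is a genuine and correctly handled subtlety that the paper glosses over.
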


The proof of this lemma follows immediately from Lemma~\ref{lemma:one-visit} and Def.~\ref{def:disc}, and is thus omitted. 

\begin{corollary}
    $\oneV$ can be solved in $\bigO(n)$ time.
\end{corollary}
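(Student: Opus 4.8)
The plan is to derive the corollary directly from Lemma~\ref{obs:one-visit}, which reduces feasibility of a $\oneV$ instance to a purely structural test on its discretized sequence: the instance is feasible if and only if every element of the discretized sequence $\langle a_1,\ldots,a_n\rangle$ is strictly positive. Since correctness of the decision procedure is already guaranteed by that lemma, all that remains is to argue that the test can be carried out in linear time.

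First I would recall that the discretized sequence is defined by the recurrence $a_n = d_n$ and $a_i = \min\{a_{i+1}-1,\, d_i\}$ for $i < n$ (Def.~\ref{def:disc}). This can be evaluated by a single right-to-left pass over the input: maintaining the running value $a_{i+1}$, each entry $a_i$ is computed with one comparison and one subtraction, so the whole sequence is produced in $\bigO(n)$ time and $\bigO(n)$ space (indeed $\bigO(1)$ working space if we only need the sign test). This matches the remark immediately following Def.~\ref{def:disc} that a discretized sequence can be computed in $\bigO(n)$ time.

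Next I would note that scanning the computed sequence to check whether all $a_i > 0$ costs another $\bigO(n)$ time; in fact the positivity test can be folded into the same backward pass, so that the algorithm simply reports infeasibility as soon as it encounters a non-positive $a_i$ and reports feasibility otherwise. Combining the linear-time computation with Lemma~\ref{obs:one-visit} yields a correct decision procedure running in $\bigO(n)$ total time, which establishes the corollary.

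There is essentially no technical obstacle here, since the heavy lifting is done by Lemma~\ref{obs:one-visit}; the only point requiring mild care is confirming that the discretized sequence truly admits a one-pass evaluation, i.e.\ that each $a_i$ depends only on $a_{i+1}$ and $d_i$ and not on earlier-indexed entries. That dependency structure is immediate from the recurrence, so the linear-time bound follows without further calculation.
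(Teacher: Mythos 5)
Your proof is correct and matches the paper's (implicit) argument exactly: the corollary follows directly from Lemma~\ref{obs:one-visit} together with the observation after Def.~\ref{def:disc} that the discretized sequence is computable in $\bigO(n)$ time via the backward recurrence, followed by a linear positivity scan. No further comment is needed.
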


\section{Tractable special cases of \twoV}\label{sec:two_visits_algo}

Trivial algorithms, such as greedily visiting the node with minimum deadline, fail for the $\twoV$ problem. In fact, as we will see later, there are instances of $\twoV$ where \emph{neither the first nor the second visits appear in order of non-decreasing deadline in any feasible schedule}.

\begin{definition}[Induced deadline]\label{def:induced_deadline}
    For a node $i$ whose \emph{primary} visit occurs in position $t_i$ of a schedule, we define its \emph{induced} deadline as $d_i'=d_i + t_i$. Intuitively, this number dictates the latest position in which the \emph{secondary} visit of node $i$ is allowed to occur.
\end{definition}

A key intuition for any algorithm is to push primary visits as late as possible to maximize the induced deadlines. 
This intuition, although crucial, is insufficient for solving the $\twoV$ problem. It turns out that, for certain instances, the primary visits must be permuted in a counterintuitive manner in order to allow a feasible schedule. This is demonstrated in the following example.

\begin{example}\label{example:non-sorted-first}
    Let $\langle 4,5,6,7,8,8,10,10,11,15,22,23\rangle $ be a $\twoV$ instance. One can observe that it is impossible to construct a feasible schedule for this instance by choosing primary visits in non-decreasing order. This happens because nine primary visits must be done until time unit $11$, leaving space for only two secondary visits until that time unit. A reasonable choice is to visit the first two nodes twice each as early as possible; however, this would only make the induced deadline of the third node equal to $11$, which causes a problem because there is already a primary visit that must be placed at position $11$. A correct order of primary visits is actually $4,5,8,6,7,8,10,10,11,15,22,23$; it is not hard to verify that there exists a feasible schedule respecting this order.
\end{example}

At first glance, the aforementioned permutation of primary visits in Example~\ref{example:non-sorted-first} seems somewhat arbitrary. Surprisingly, every feasible schedule of that instance requires the nodes with deadlines $6,7,8,8$ to be visited in the order $8,6,7,8$. In order to systematically study permutations of primary visits, we will later introduce the $\PM$ problem, which serves as a key component of both our algorithms and hardness results.

\subsection{Properties of feasible \twoV\ schedules}

In the proofs of this subsection we will transform feasible schedules to make them satisfy certain useful properties, thus greatly limiting the amount of schedules that have to be considered by a $\twoV$ algorithm. 
By Lemma~\ref{obs:one-visit}, an input with a discretized sequence containing a non-positive number cannot have a feasible schedule for $\oneV$, and thus neither for $\twoV$ (by Obs.~\ref{obs:density}). From now on, we assume that the input always has a discretized sequence of positive numbers. We will also assume that the input contains no deadlines greater than $2n$. If it does, then the respective node would never expire and, thus, its two visits can be placed at the last two positions of the schedule. Clearly, a feasible schedule exists if and only if the input minus that node has a feasible schedule.

\begin{definition}[Gap]
    Let $A$ be the discretized sequence of a $\twoV$ instance. We call a position $i \in [2n]$ a \emph{gap} if $i \notin A$.
\end{definition}

With the above assumptions, any $\twoV$ instance has exactly $n$ gaps in $[2n]$. In Example~\ref{example:non-sorted-first}, the discretized sequence is $\langle 3,4,5,6,7,8,9,10,11,15,22,23\rangle $; hence, the $n=12$ gaps of this instance are $\{1,2,12,13,14,16,17,18,19,20,21,24\}$. 

A key idea that we will use in our approach is that placing all primary visits in the positions dictated by the discretized sequence is, in a certain sense, the best choice for obtaining a schedule. This crucial observation is formalized in Lemma~\ref{lemma:gaps} and will play a major role for the algorithms we propose in this section, as well as for the proof of strong NP-completeness of $\twoV$ in Section~\ref{sec:two_visits_hardness}.

Before stating Lemma~\ref{lemma:gaps}, we prove two auxiliary lemmas.

\begin{lemma}\label{lemma:disc_seq}
    Let $A=\langle a_1,\ldots,a_n\rangle $ be the discretized sequence of a $\twoV$ instance $D=\langle d_1,\ldots,d_n\rangle $. For any $i \in [n]$ it holds that at most $n-i$ primary visits can occur after position $a_i$.
\end{lemma}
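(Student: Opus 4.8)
The plan is to recast the statement in its equivalent positive form: since a schedule contains exactly $n$ primary visits, ``at most $n-i$ primary visits after position $a_i$'' is the same as ``at least $i$ primary visits occur in the positions $[1,a_i]$.'' I would prove this latter form by isolating a local structural property of the discretized sequence and then doing a one-line pigeonhole count on a short window of positions.

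First I would pin down the relevant structure of $A=\langle a_1,\dots,a_n\rangle$. Let $m\ge i$ be the \emph{smallest} index with $a_m=d_m$; such an index exists because $a_n=d_n$ by Definition~\ref{def:disc}. For every $k$ with $i\le k<m$ we have $a_k\neq d_k$, hence $a_k<d_k$ (recall $a_k\le d_k$ always), and the recurrence $a_k=\min\{a_{k+1}-1,d_k\}$ then forces $a_k=a_{k+1}-1$: if the minimum were attained at $d_k$ we would get $a_k=d_k$, a contradiction. Consequently $a_i,a_{i+1},\dots,a_m$ are consecutive integers, i.e.\ $a_k=a_m-(m-k)$ for $i\le k\le m$, and in particular the window $(a_i,a_m]$ consists of exactly $m-i$ positions.

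Next comes the counting. Since the deadlines are non-decreasing, every node $k\in[m]$ satisfies $d_k\le d_m=a_m$, so its primary visit (which lies at a position at most $d_k$) falls inside $[1,a_m]$; as these are $m$ distinct positions, at least $m$ primary visits occur in $[1,a_m]$. On the other hand the window $(a_i,a_m]$ contains only $m-i$ positions, so it holds at most $m-i$ primary visits. Subtracting, at least $m-(m-i)=i$ primary visits lie in $[1,a_i]$, which is exactly the desired bound.

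The main subtlety to get right is to recognize why a naive deadline count is insufficient: it is simply false that $i$ nodes have deadline at most $a_i$ (for instance, for $i=1$ in Example~\ref{example:non-sorted-first} no node has deadline $\le a_1=3$), so one cannot argue that ``$i$ primaries are forced into $[1,a_i]$ by their own deadlines.'' The crux is that the discretized sequence climbs by exactly one between index $i$ and the next tight index $m$ (where $a_m=d_m$), so the $m-i$ surplus slots of $(a_i,a_m]$ cannot absorb more than $m-i$ of the $m$ primaries guaranteed to sit in $[1,a_m]$. Identifying this $m$ and establishing the consecutiveness of $a_i,\dots,a_m$ is the one genuine step; everything after it is bookkeeping.
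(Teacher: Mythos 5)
Your proof is correct, and it takes a genuinely different route from the paper's. The paper proves the lemma by downward induction on $i$ (base case $i=n$), splitting each inductive step into the two cases of the recurrence in Definition~\ref{def:disc}: either $a_i = a_{i+1}-1$, in which case the bound propagates from $a_{i+1}$ by consecutiveness of positions, or $a_i = d_i$, in which case monotonicity of the deadlines forbids the primaries of nodes $j \le i$ from landing after $a_i$. You instead unroll that induction: you locate the first tight index $m \ge i$ with $a_m = d_m$, show that $a_i,\ldots,a_m$ must be consecutive integers, and finish with a single pigeonhole count --- the $m$ primaries of nodes $1,\ldots,m$ are forced into $[1,a_m]$ by their deadlines, while the window $(a_i,a_m]$ has only $m-i$ slots, so at least $i$ primaries sit in $[1,a_i]$. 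Both arguments rest on exactly the same two structural facts (the min-recurrence dichotomy and $d_k \le d_m$ for $k \le m$), but yours assembles them into a direct, non-inductive counting argument that makes the global structure of the discretized sequence (runs of consecutive values terminating at a tight index) explicit; the paper's induction is slightly more compact and purely local. Your observation that the naive argument (``$i$ nodes have deadline at most $a_i$'') is simply false is a worthwhile insight that the paper's proof does not spell out.
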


\begin{proof}
    We use induction on $i$, with $i=n$ as base.

    \emph{Basis $(i=n)$.} We have $d_n = a_n$, hence no primary visit may occur after position~$a_n$.

    \emph{Inductive step $(i<n)$.} Suppose that at most $n-i-1$ primary visits occur after position~$a_{i+1}$. By Def.~\ref{def:disc}, we need to consider the following two cases.
    
    \emph{Case 1:} $a_i =a_{i+1}-1$. Since $a_i,\ a_{i+1}$ are consecutive positions, it follows immediately from the induction hypothesis that at most $n-i$ primary visits can occur after $a_i$.
    
    \emph{Case 2:} $a_i =d_i$. Primary visits of nodes $j \leq i$ cannot be placed after position $a_i$, since $d_j \leq d_i = a_i$. Hence, at most $n-i$ primary visits can occur after $a_i$.
\end{proof}

\begin{lemma}\label{lemma:auxiliary}
    If a feasible schedule places some secondary visit in a non-gap position $p$, then there exists some primary visit in a position $q<p$ belonging to a node $u$ with $d_u \geq p$.
\end{lemma}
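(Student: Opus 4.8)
The plan is to exploit the counting bound from Lemma~\ref{lemma:disc_seq} together with a pigeonhole argument comparing the number of primary visits forced to lie before $p$ against the number of nodes that could possibly carry them. The starting observation is that since $p$ is a non-gap position, we have $p = a_k$ for some $k\in[n]$, where $A=\langle a_1,\ldots,a_n\rangle$ is the discretized sequence. Note that the $a_i$ are strictly increasing (since $a_i \le a_{i+1}-1 < a_{i+1}$ by Def.~\ref{def:disc}), so $p=a_k$ is unambiguous, and also that $a_i \le d_i$ for all $i$ (immediate from Def.~\ref{def:disc}), which gives $d_k \ge a_k = p$.

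First I would lower-bound the number of primary visits occurring weakly before $p$. Applying Lemma~\ref{lemma:disc_seq} with $i=k$, at most $n-k$ primary visits occur after position $a_k=p$; since there are $n$ primary visits in total, at least $k$ of them occur at positions $\le p$. By assumption position $p$ itself holds a \emph{secondary} visit, and each position holds exactly one visit, so none of these $\ge k$ primary visits sits at position $p$. Hence at least $k$ primary visits occur strictly before $p$, i.e.\ in $\{1,\ldots,p-1\}$, and since each node has a unique primary visit these correspond to $k$ distinct nodes.

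Next I would upper-bound the number of nodes whose deadline is smaller than $p$. Because $d_k \ge p$ and the deadlines are non-decreasing, every node $j \ge k$ satisfies $d_j \ge p$; thus the nodes with $d_j < p$ form a subset of $\{1,\ldots,k-1\}$ and number at most $k-1$. Now the pigeonhole finishes the argument: we have $k$ distinct nodes whose primary visits lie at positions $q < p$, but at most $k-1$ of them can have deadline below $p$, so at least one such node $u$ must satisfy $d_u \ge p$. Its primary visit sits at some position $q<p$, which is exactly the node and position claimed.

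I expect the only delicate points to be bookkeeping rather than conceptual: carefully justifying that the $\ge k$ primary visits counted at positions $\le p$ actually fall strictly before $p$ (relying on position $p$ being occupied by the secondary visit), and correctly deriving $d_k \ge p$ from the inequality $a_k \le d_k$ built into the discretized sequence. The main structural obstacle is recognizing that Lemma~\ref{lemma:disc_seq} is precisely the tool that converts ``$p$ is a non-gap position'' into a hard lower bound of $k$ primary visits before $p$, at which point the counting gap of exactly one against the $k-1$ admissible small-deadline nodes yields the desired node $u$.
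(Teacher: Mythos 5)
Your proof is correct and follows essentially the same route as the paper's: identify $p = a_k$ in the discretized sequence, invoke Lemma~\ref{lemma:disc_seq} to force at least $k$ primary visits strictly before $p$ (strictness coming from $p$ being occupied by the secondary visit), and finish with a pigeonhole argument against the at most $k-1$ nodes whose deadlines can be smaller than $p$. The paper's version is terser—it phrases the pigeonhole as finding a deadline at least $d_k$ and then uses $a_k \le d_k$—but the underlying argument is identical, and your additional bookkeeping (strict increase of $A$, exactly one visit per position) is sound.
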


\begin{proof}    
    Since $p$ is not a gap, it holds that $p \in A$. Suppose $p$ is the $i$-th element of $A$, i.e., $p = a_i$.
    By Lemma~\ref{lemma:disc_seq}, there are at least $i$ primary visits before position $p$ (not including $p$, by assumption). By the pigeonhole principle, one of these must have deadline at least $d_i$. By Def.~\ref{def:disc}, $p = a_i \leq d_i$.
\end{proof}

\begin{lemma}\label{lemma:gaps}
    If a $\twoV$ instance has a feasible schedule, then it has a feasible schedule in which all secondary visits are placed in gaps (equivalently: no primary visit is placed in a gap).
\end{lemma}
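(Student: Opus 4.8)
The plan is to start from an arbitrary feasible schedule and transform it, through a sequence of local swaps, into one in which no primary visit occupies a gap. The key structural fact I would exploit is Lemma~\ref{lemma:auxiliary}: whenever a \emph{secondary} visit sits in a non-gap position $p=a_i$, there is a primary visit at some earlier position $q<p$ whose node $u$ has $d_u\geq p$. Such a node $u$ is ``robust'' in the sense that its induced deadline $d_u'=d_u+q$ is large, so its visits can afford to be moved later. This suggests the following exchange: if a primary visit lands in a gap while some secondary visit occupies a non-gap position $p$, I will swap roles so that the gap is filled by a secondary visit instead, pushing a primary visit out of the gap and onto a non-gap (discretized) position.

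First I would set up a potential/counting argument. Since the instance has exactly $n$ gaps in $[2n]$ (by the standing assumptions and the Gap definition) and exactly $n$ non-gap positions, and there are $n$ primary and $n$ secondary visits, the statement ``all secondary visits lie in gaps'' is equivalent to ``all primary visits lie in non-gaps,'' as the parenthetical in the lemma notes. So it suffices to reduce the number of primary visits placed in gaps, one at a time, without ever increasing it and without violating feasibility. I would choose a measure such as the number of primary visits sitting in gaps (or, to make the induction terminate cleanly, the leftmost gap that contains a primary visit), and argue that each swap strictly improves it.

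The core of the argument is the swap itself. Suppose a primary visit of some node $v$ sits in a gap position $g$. Because $g$ is a gap, $g\notin A$, and I want to relocate $v$'s primary visit to a discretized position. The natural target is a non-gap position currently holding a secondary visit; by Lemma~\ref{lemma:auxiliary} the node $u$ whose primary visit is guaranteed to exist before such a position $p$ has $d_u\geq p$, giving enough slack to reassign which of $u$'s two visits counts as primary. Concretely, I would argue that one can always rearrange the visits in the affected positions so that a secondary visit moves into $g$ and a primary visit moves into a non-gap position, checking in each case (using Def.~\ref{def:induced_deadline} for induced deadlines, Lemma~\ref{lemma:disc_seq} for the distribution of primary visits, and Def.~\ref{def:2V} for what ``feasible'' means when a secondary visit may precede its primary) that no deadline is violated. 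Care is needed because moving a primary visit earlier shrinks its induced deadline, so I must verify the displaced secondary visit still fits; this is exactly where the $d_u\geq p$ guarantee from Lemma~\ref{lemma:auxiliary} is used.

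The main obstacle I anticipate is ensuring that the swap is genuinely local and terminating — that fixing one gap does not create a new primary-in-gap violation at an earlier position, which would prevent the potential from decreasing. I would handle this by always selecting the \emph{rightmost} gap containing a primary visit (or the leftmost non-gap holding a secondary visit) and swapping toward it, so that the region to its right is already clean and stays clean; a careful case analysis on whether the relevant secondary visit lies before or after its primary (the two clauses of Def.~\ref{def:2V}) should confirm feasibility is preserved. Once each swap strictly decreases the finite potential while maintaining feasibility, the process terminates in a feasible schedule with no primary visit in any gap, which is the claim.
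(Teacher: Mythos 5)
You have the right key lemma and the right overall strategy (iterated local swaps plus a termination argument), but the swap you describe moves visits in exactly the unsafe directions, and this is a genuine gap. Your exchange wants a secondary visit to move \emph{into} the gap $g$ and a primary visit to move \emph{out} of it; depending on where $g$ lies relative to the non-gap position $p$, this forces either a secondary visit to move later (which can violate feasibility outright, e.g.\ when that secondary currently sits exactly at its induced deadline, Definition~\ref{def:induced_deadline}) or a primary visit to move earlier (which shrinks its induced deadline and can invalidate its own secondary visit --- the very difficulty you flag). The guarantee $d_u \geq p$ from Lemma~\ref{lemma:auxiliary} cannot repair either problem: it certifies only that the primary visit sitting at $q<p$ may be moved \emph{later}, namely to $p$. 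The paper's swap uses it for exactly that and nothing else: the primary visit of $u$ goes from $q$ to $p$ (feasible since $d_u \geq p$, and its induced deadline only grows, so $u$'s own secondary is unaffected), while the secondary visit occupying $p$ goes to $q$ --- and moving a secondary \emph{earlier} is always safe under Definition~\ref{def:2V}, even in the degenerate case $u=v$, where the secondary simply ends up before its primary, which that definition permits. No visit ever moves in a direction requiring the case analysis you defer (``checking in each case that no deadline is violated''); that deferred check is precisely where your version would break.

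Relatedly, your potential does not decrease under the safe swap. When the position $q$ supplied by Lemma~\ref{lemma:auxiliary} is itself a non-gap, the swap changes neither the number of primary visits in gaps nor the rightmost gap containing one; what strictly decreases is the position of the tracked secondary visit. The paper therefore fixes one node $v$ whose secondary sits in a non-gap, applies the swap repeatedly --- each application moves $v$'s secondary strictly earlier, so after finitely many steps it lands in a gap --- and then observes that the procedure never displaces secondaries already placed in gaps, so processing the nodes one after another terminates with all secondaries in gaps. Your counting equivalence (``all secondaries in gaps'' iff ``no primary in a gap'') is correct, but it should be the final reformulation of the conclusion, not the engine of the induction: the monotone quantity is where the secondary visits are, not how many primaries occupy gaps.
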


\begin{proof}
    Suppose a $\twoV$ instance has a feasible schedule which places the secondary visit of some node $v$ in a non-gap position $p$. We will modify that schedule to force the secondary visit of $v$ to be placed in a gap.
    
    By Lemma~\ref{lemma:auxiliary}, there is a primary visit in a position $q<p$ belonging to a node $u$ with $d_u \geq p$. This allows us to swap the entries in positions $q,\ p$ without affecting the feasibility of the schedule; the primary visit of $u$ can be placed in position $p$, since $d_u \geq p$, and the secondary visit of $v$ is moved to an earlier position, preserving feasibility. Note that this argument holds even if $v=u$.\footnote{This is one of the reasons why we gave Def.~\ref{def:2V} for the $\twoV$ problem. This lemma would become significantly more complicated if Def.~\ref{def:kV} was used instead.}

    If by the above modification the secondary visit of $v$ was again placed in a non-gap position, we can apply Lemma~\ref{lemma:auxiliary} iteratively and do the same modification. Note that this modification moves the secondary visit of $v$ to a (strictly) earlier position, therefore it can only be applied a finite number of times. This implies that this procedure will terminate at some point, which can only happen if the secondary visit of $v$ is placed in a gap. Note that an appropriate gap must exist (by Lemma~\ref{lemma:auxiliary}).

    We can apply this procedure to all nodes whose secondary visits are not in gaps. Note that the procedure never displaces secondary visits that have been placed in gaps, therefore, by doing this, we will end up with a feasible schedule satisfying the desired property.
\end{proof}

We now present the main theorem of this subsection.


\begin{theorem}\label{theorem:second_visits}
    If a $\twoV$ instance has a feasible schedule, then it has a feasible schedule in which:
    
    \begin{enumerate}
        \item All secondary visits are placed in gaps. (Equivalently: no primary visit is placed in a gap.)
        \item Secondary visits appear in the schedule in order of non-decreasing induced deadlines.
    \end{enumerate}
\end{theorem}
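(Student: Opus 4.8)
The plan is to start from the schedule guaranteed by Lemma~\ref{lemma:gaps} — which already satisfies property~1 — and then rearrange \emph{only} the secondary visits among the gap positions until they are sorted by induced deadline, taking care that this rearrangement neither destroys property~1 nor breaks feasibility. The observation that makes this clean is that, once all secondary visits sit in gaps, the $n$ primary visits must occupy exactly the $n$ non-gap (discretized) positions and hence never move during the rearrangement; consequently the induced deadlines $d_i' = d_i + t_i$ (Def.~\ref{def:induced_deadline}) stay fixed throughout, so ``sorting by induced deadline'' is a well-defined target.

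First I would reformulate the feasibility constraint on a secondary visit in this restricted setting. Using Def.~\ref{def:2V}, a secondary visit placed at a position different from its node's primary position is feasible if and only if its position is at most the induced deadline $d_i'$: the ``before the primary visit'' case is automatically subsumed, since any position below the primary position $t_i$ is also below $d_i' = d_i + t_i$ (as $d_i \geq 1$). Thus, after Lemma~\ref{lemma:gaps}, a schedule is feasible exactly when each secondary visit occupies a gap whose position does not exceed the induced deadline of the corresponding node.

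The core of the argument is an inversion/exchange argument. List the gap positions in increasing order $g_1 < \cdots < g_n$, and say the current schedule assigns the secondary visit of node $\pi(j)$ to $g_j$. Call a pair $j < k$ an \emph{inversion} if $d_{\pi(j)}' > d_{\pi(k)}'$. If no inversion exists we are done, so suppose an adjacent inversion $(j,\,j{+}1)$ exists, i.e.\ $d_{\pi(j)}' > d_{\pi(j+1)}'$, and swap the two secondary visits. Both remain in gaps, so property~1 is preserved, and the primaries (hence all induced deadlines) are untouched. Feasibility is maintained: node $\pi(j)$ moves to $g_{j+1}$, and $g_{j+1} \leq d_{\pi(j+1)}' < d_{\pi(j)}'$; node $\pi(j+1)$ moves to the earlier gap $g_j < g_{j+1} \leq d_{\pi(j+1)}'$. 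Each such swap strictly decreases the number of inversions, so after finitely many swaps the secondary visits appear in non-decreasing induced-deadline order, yielding property~2 while retaining property~1.

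The one place where something could go wrong — and hence the point deserving the most care — is the interaction between the two properties: I must ensure the swaps used to establish the sorted order move secondary visits only between gap positions and never displace a primary visit, so that property~1 and the fixity of the induced deadlines are preserved simultaneously. Verifying the two inequalities above is routine once the feasibility constraint has been reduced to ``secondary position $\leq$ induced deadline''; the conceptual care lies in noting that this reduction is valid precisely because Lemma~\ref{lemma:gaps} has already pinned every primary visit to a non-gap position.
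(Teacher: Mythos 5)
Your proposal is correct and follows essentially the same route as the paper: start from the schedule given by Lemma~\ref{lemma:gaps}, keep the primary visits (and hence the induced deadlines) fixed, and sort the secondary visits among the gap positions by non-decreasing induced deadline, justifying feasibility by an exchange argument on inversions. In fact, you spell out the two details the paper leaves implicit --- the reduction of the secondary-visit constraint to ``gap position $\leq$ induced deadline'' and the verification that each adjacent swap preserves feasibility --- so your write-up is a more explicit version of the paper's appeal to ``swapping arguments similar to the proof of Lemma~\ref{lemma:one-visit}.''
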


\begin{proof}
    By Lemma~\ref{lemma:gaps}, we know that there is a feasible schedule satisfying the first property. Take such a schedule and reorder its secondary visits by non-decreasing induced deadline (primary visit positions are not affected by this modification). We can prove that the new schedule is also feasible through swapping arguments similar to the proof of Lemma~\ref{lemma:one-visit} (using induced deadlines instead of regular deadlines). 
\end{proof}

\subsection{Reducing \twoV\ to \PM}

\begin{definition}[Cluster]\label{def:cluster}
    We call a maximal subsequence of consecutive numbers of the discretized sequence of a $\twoV$ instance a \emph{cluster}. For two clusters $C_1,\ C_2$ of a discretized sequence, we say that $C_1$ \emph{precedes} $C_2$ if the elements of $C_1$ are smaller than the elements of $C_2$.
\end{definition}


\begin{example}\label{example:clusters}
    Consider input $D=\langle 6,8,8,8,11,11,14\rangle$ with $n=7$. Then, $A=\langle 5,6,7,8,10,11,14\rangle$ is made up of three clusters: $C_1=\langle 5,6,7,8\rangle$, $C_2=\langle 10,11\rangle$ and $C_3=\langle 14\rangle $. Every $i\in [2n]$ that is not in any of these clusters is a gap.
\end{example}


\begin{observation}\label{obs:clusters}
    If $\langle a_p,\ldots,a_q\rangle$ is a cluster, then it is the discretized sequence of $\langle d_p,\ldots,d_q\rangle$. For example, $C_1=\langle 5,6,7,8\rangle$ is the discretized sequence of $\langle 6,8,8,8\rangle$ in Example~\ref{example:clusters}. Similarly, $C_2=\langle 10,11\rangle$ is the discretized sequence of $\langle 11,11\rangle$. The reason that this property holds is that $d_q = a_q$, by Definitions~\ref{def:disc},~\ref{def:cluster}.
\end{observation}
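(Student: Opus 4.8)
The plan is to prove the claimed identity by first pinning down the top element of the cluster and then propagating downward through the defining recurrence of Definition~\ref{def:disc}, which is purely local: each $a_i$ depends only on $a_{i+1}$ and $d_i$, so once the top entries of the two sequences agree, everything below must agree as well. As a preliminary fact I would record that a discretized sequence is strictly increasing: from Definition~\ref{def:disc}, $a_i=\min\{a_{i+1}-1,d_i\}\le a_{i+1}-1<a_{i+1}$ for every $i<n$. Consequently, ``consecutive numbers'' in the sense of Definition~\ref{def:cluster} means exactly that $a_{i+1}=a_i+1$ for $p\le i<q$, and maximality of the cluster $\langle a_p,\ldots,a_q\rangle$ forces either $q=n$ or $a_{q+1}>a_q+1$ (equivalently $a_{q+1}-1>a_q$).

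The key step is the base case $a_q=d_q$. If $q=n$, this is immediate from the first line of Definition~\ref{def:disc}. Otherwise $a_q=\min\{a_{q+1}-1,d_q\}$, and since $a_{q+1}-1>a_q$ by the maximality just noted, the minimum cannot be attained at the term $a_{q+1}-1$; hence $a_q=d_q$.

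To finish, let $\langle a'_p,\ldots,a'_q\rangle$ denote the discretized sequence of the length-$(q-p+1)$ input $\langle d_p,\ldots,d_q\rangle$. By Definition~\ref{def:disc} applied to this input, $a'_q=d_q$, which equals $a_q$ by the previous paragraph, and $a'_i=\min\{a'_{i+1}-1,d_i\}$ for $p\le i<q$. I would then run a backward induction from $i=q$ down to $i=p$: assuming $a'_{i+1}=a_{i+1}$ and using that $i<q\le n$ (so the original recurrence $a_i=\min\{a_{i+1}-1,d_i\}$ is valid at index $i$), one obtains $a'_i=\min\{a_{i+1}-1,d_i\}=a_i$. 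Thus $a'_i=a_i$ for all $i\in[p,q]$, which is exactly the assertion that $\langle a_p,\ldots,a_q\rangle$ is the discretized sequence of $\langle d_p,\ldots,d_q\rangle$.

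The only real obstacle is the base case: the entire statement hinges on the identity $a_q=d_q$, and this is precisely where the maximality hypothesis of the cluster (combined with strict monotonicity) is needed. Once that identity is secured, the coincidence of the two sequences is automatic, because the recurrence is local and the two inputs share the entries $d_p,\ldots,d_q$.
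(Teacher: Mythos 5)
Your proof is correct and takes essentially the same route as the paper: the paper's one-line justification is precisely your key identity $a_q = d_q$ (forced by maximality of the cluster together with Definition~\ref{def:disc}), after which the recurrence defining the discretized sequence is local and coincides for $\langle d_p,\ldots,d_q\rangle$. You simply make explicit the strict monotonicity, the case analysis for $q=n$ versus $q<n$, and the backward induction that the paper leaves implicit.
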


\begin{lemma}\label{lemma:clusters}
    Let $\langle a_i,\ldots,a_j\rangle$ be a cluster of the discretized sequence $A$ of a $\twoV$ input. For any feasible schedule that satisfies the properties of Theorem~\ref{theorem:second_visits}, it holds that the primary visits of nodes $i,\ldots,j$ are placed in some permutation of the positions in $\langle a_i,\ldots,a_j\rangle$.
\end{lemma}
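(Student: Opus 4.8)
The plan is to rely only on the first property of Theorem~\ref{theorem:second_visits} (that no primary visit occupies a gap); the second property will not be needed. Under this property the $n$ primary visits occupy the $n$ non-gap positions, which are exactly the elements $a_1,\ldots,a_n$ of the discretized sequence $A$. Hence a feasible schedule of the required form induces a \emph{bijection} between the set of nodes $[n]$ and the set of positions $\{a_1,\ldots,a_n\}$. My goal is to show that this bijection respects the cluster decomposition, in the sense that it maps the node block $\{i,\ldots,j\}$ of each cluster onto the corresponding position block $\{a_i,\ldots,a_j\}$.

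First I would prove the following prefix statement: \emph{for every index $q$ that is the right endpoint of a cluster, the primary visits of nodes $1,\ldots,q$ occupy exactly the positions $a_1,\ldots,a_q$.} The crucial ingredient is that at a cluster endpoint $q$ we have $d_q = a_q$ (this is precisely the fact recorded in Observation~\ref{obs:clusters}, following from Definitions~\ref{def:disc} and~\ref{def:cluster}). Since the deadlines are non-decreasing, every node $m \le q$ satisfies $d_m \le d_q = a_q$, so by Definition~\ref{def:2V} its primary visit lies at a position $\le a_q$; by the first property of Theorem~\ref{theorem:second_visits} this position is a non-gap, hence belongs to $\{a_1,\ldots,a_q\}$. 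This forces $q$ distinct nodes into the $q$ positions $\{a_1,\ldots,a_q\}$ injectively, so the induced map is a bijection onto that set and, in particular, no node $m > q$ can have its primary visit among $a_1,\ldots,a_q$. (Alternatively, the same conclusion follows by combining Lemma~\ref{lemma:disc_seq}, which yields at least $q$ primary visits at or before $a_q$, with the fact that only $q$ non-gap positions are available there.)

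To finish, I would apply the prefix statement to the two cluster endpoints that bracket the cluster $\langle a_i,\ldots,a_j\rangle$: the index $j$ is the endpoint of this cluster, and the index $i-1$ is the endpoint of the preceding cluster (or the empty prefix when $i=1$, since then all positions before $a_1$ are gaps). Thus nodes $1,\ldots,j$ fill positions $a_1,\ldots,a_j$ and nodes $1,\ldots,i-1$ fill positions $a_1,\ldots,a_{i-1}$; subtracting the two, the nodes $i,\ldots,j$ must fill precisely the positions $a_i,\ldots,a_j$, in some permutation, as claimed.

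The main obstacle—and the reason the statement is phrased with ``some permutation''—is that node identification is only available at cluster \emph{endpoints}, where the equality $d_q = a_q$ pins down the prefix. Inside a cluster the deadlines may strictly exceed the discretized values, so the assignment of individual nodes to positions genuinely can be a nontrivial permutation, exactly as witnessed by the reordering $8,6,7,8$ in Example~\ref{example:non-sorted-first}; the lemma therefore cannot be strengthened to assert that node $i$ lands on position $a_i$. Minor care is also needed at the boundary case $i=1$ and in verifying that $i-1$ is indeed a cluster endpoint whenever the cluster is not the first one, which holds because a gap separates it from the previous cluster.
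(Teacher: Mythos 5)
Your proof is correct, and it rests on exactly the same two facts as the paper's proof: at every cluster right-endpoint $q$ one has $d_q = a_q$ (so, by monotonicity of the deadlines, no node $m \le q$ can have its primary visit after position $a_q$), and property~1 of Theorem~\ref{theorem:second_visits} confines primary visits to the elements of $A$. The difference is purely organizational: the paper proves the claim by induction over the clusters from left to right, using the inductive hypothesis to rule out the already-occupied positions $a_1,\ldots,a_{p-1}$ for the current cluster, whereas you prove an induction-free prefix statement (nodes $1,\ldots,q$ biject onto $\{a_1,\ldots,a_q\}$ at every cluster endpoint $q$, by a counting/pigeonhole argument) and obtain the lemma by subtracting two prefixes. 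Your version buys a slightly cleaner write-up and makes explicit that only property~1 of Theorem~\ref{theorem:second_visits} is needed (the ordering of secondary visits plays no role), while the paper's induction more directly mirrors the cluster-by-cluster structure that is reused later in Theorem~\ref{theorem:reduction}. One small caveat: your parenthetical alternative via Lemma~\ref{lemma:disc_seq} only yields that \emph{some} $q$ primary visits fill the prefix $\{a_1,\ldots,a_q\}$; identifying them as the visits of nodes $1,\ldots,q$ still requires the deadline bound $d_m \le a_q$, so it is a supplement to, not a replacement for, your main argument.
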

    
\begin{proof}
    If $A$ consists of only one cluster, then the lemma follows immediately from Theorem~\ref{theorem:second_visits}. Assume $A$ consists of more than one cluster.

    Let $F$ be a feasible schedule that satisfies the properties of Theorem~\ref{theorem:second_visits} and let $C_1 = \langle a_1,\ldots,a_m\rangle$ be the first cluster of $A$. Since $C_1$ is a cluster, it holds that $a_m < a_{m+1} - 1$. By Def.~\ref{def:disc}, this implies that $a_m = d_m$ and, thus, $d_m < a_{m+1} - 1$. Since $D=\langle d_1,\ldots,d_n\rangle$ is a non-decreasing sequence, we obtain that: 

    \begin{equation}\label{eq1}
        d_i < a_{m+1} - 1, \ \forall i \in [m]
    \end{equation}

    Recall that in a schedule satisfying the properties of Theorem~\ref{theorem:second_visits}, no primary visit is in a gap. Thus, all primary visits must be placed at some $a_i \in A$. By~(\ref{eq1}), we know that the first $m$ primary visits cannot be placed in $a_{m+1}$ or later. Thus, all of them must be placed in some permutation of the positions $a_1,\ldots,a_m$.

    We can prove the lemma through induction, using the above proof for $C_1$ as the induction basis. In each step, we study a cluster $C_j=\langle a_p,\ldots, a_q\rangle $. By induction hypothesis, we know that all $a_i,\ i \in [p-1]$, are occupied by primary visits of nodes $i\in [p-1]$. Exactly the same arguments apply to show that primary visits of nodes $i \in [p, q]$ cannot be placed in any position greater than $a_{q}$, thus forcing them to be placed in some permutation of the positions in $\langle a_p,\ldots,a_q\rangle $, which completes the proof. 
\end{proof}

By Theorem~\ref{theorem:second_visits} and Lemma~\ref{lemma:clusters}, we know that in order to find a feasible schedule, it suffices to:
\begin{itemize}
    \item Place the primary visits of nodes corresponding\footnote{We say that nodes $i,\ldots,j$ \emph{correspond} to cluster $C=\langle a_i,\ldots,a_j\rangle $.} to each cluster in some permutation of the positions of that cluster.
    \item  Place secondary visits in gaps in order of non-decreasing induced deadline.
\end{itemize}

A brute force algorithm based on this information would run in time $\bigO (n!)$, if all $n$ elements of the discretized sequence form a cluster. In order to investigate whether this complexity can be improved, we introduce the $\PM$ problem, which captures the question of how primary visits should be permuted (see Theorem~\ref{theorem:reduction}).

\begin{definition}[$\PM$]\label{def:SM}
    Given a non-decreasing sequence $D=\langle d_1,\ldots,d_n\rangle$ of positive integers, its discretized sequence $A=\langle a_1,\ldots,a_n\rangle $ and a set $T=\{t_1,\ldots,t_n\}$ of $n$ distinct positive integers (targets), the $\PM$ problem asks whether there is a subset $M$ of $D\times A\times T$ s.t. every $d_i \in D$, $a_i \in A$, $t_i \in T$ occurs exactly once in $M$ and for every triplet $(d,a,t)\in M$ it holds that $d \geq a$ and $d+a\geq t$.
\end{definition}

We would like to emphasize the restriction that elements of $D$ cannot be matched with larger elements of $A$. Throughout the paper, we will say that a tuple $(d,a)$ \emph{satisfies} a target $t$ if $d+a \geq t$. 

\begin{example}
    Let $D=\langle 6,7,8,8,15,15\rangle $ and $T=\{ 12,13,14,15,20,28\} $. We compute the discretized sequence $A=\langle 5,6,7,8,14,15\rangle$ of $D$. A solution to this $\PM$ instance consists of the following $D$-$A$ pairs: $(6,6),(8,5),(7,7),(8,8),(15,14),(15,15)$. The sums of these pairs are $12,13,14,16,29,30$, satisfying all targets.
\end{example}



\begin{remark}
    Although $A$ can be derived from $D$ in Def.~\ref{def:SM}, we assume it is given as input for the sake of readability. In general, throughout the paper, whenever a variant of $\NTDMshort$ is discussed we consider all three sets as input, even if some of them can be implicitly derived. This makes our reductions in the next section more natural and easier to read.
\end{remark}

\begin{lemma}\label{lemma:secondary_sorted_by_cluster}
    Let $C_1$ and $C_2$ be two clusters with $C_1$ preceding $C_2$. If a schedule satisfies the properties of Theorem~\ref{theorem:second_visits}, then for nodes $u,v$ such that $a_u \in C_1$ and $a_v \in C_2$ the following inequality holds regarding their induced deadlines in that schedule: $d_u' < d_v'$.
\end{lemma}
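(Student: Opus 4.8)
The plan is to reduce the claim to two monotonicity facts and then combine them additively. Write $t_u$ and $t_v$ for the positions of the primary visits of $u$ and $v$, so that $d_u' = d_u + t_u$ and $d_v' = d_v + t_v$ by Def.~\ref{def:induced_deadline}. It then suffices to prove the two inequalities $d_u \leq d_v$ and $t_u < t_v$, since together they give $d_u + t_u \leq d_v + t_u < d_v + t_v$, where the second step is strict.

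First I would establish $d_u \leq d_v$. By Def.~\ref{def:disc}, every discretized value satisfies $a_i \leq a_{i+1} - 1$, so $A$ is in fact \emph{strictly} increasing and the index map $i \mapsto a_i$ is order-preserving. Since $C_1$ precedes $C_2$ we have $a_u < a_v$, hence $u < v$; as $D$ is non-decreasing this yields $d_u \leq d_v$.

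Next I would pin down the primary-visit positions by invoking Lemma~\ref{lemma:clusters}. Because the schedule satisfies the properties of Theorem~\ref{theorem:second_visits}, that lemma forces the primary visit of every node corresponding to a cluster to occupy one of the positions of that same cluster. Thus $t_u$ is a position of $C_1$ and $t_v$ a position of $C_2$. Since $C_1$ precedes $C_2$, every position of $C_1$ is strictly smaller than every position of $C_2$, which gives $t_u < t_v$.

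Combining the two inequalities yields $d_u' = d_u + t_u < d_v + t_v = d_v'$, as required. I do not expect any genuine obstacle: the entire argument is additive monotonicity once Lemma~\ref{lemma:clusters} has confined each primary visit to its own cluster. The only point needing care is that the conclusion is a \emph{strict} inequality, which is supplied by the strict positional gap $t_u < t_v$ between distinct clusters (the deadline comparison $d_u \leq d_v$ being only weak).
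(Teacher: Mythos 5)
Your proof is correct and follows essentially the same route as the paper: both invoke Lemma~\ref{lemma:clusters} to confine the primary visits of $u$ and $v$ to positions within $C_1$ and $C_2$ respectively, and both combine the weak inequality $d_u \leq d_v$ (from $u < v$ and the monotonicity of $D$) with the strict positional inequality $t_u < t_v$ to conclude $d_u' < d_v'$. Your explicit remark that $A$ is strictly increasing (hence $a_u < a_v$ implies $u < v$) is a detail the paper leaves implicit, but the argument is the same.
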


\begin{proof}
    By Lemma~\ref{lemma:clusters}, we know that, if $\langle a_i,\ldots,a_j\rangle $ is a cluster, a schedule satisfying the properties of Theorem~\ref{theorem:second_visits} places primary visits of nodes $i,\ldots,j$ in some permutation of the positions in $\langle a_i,\ldots,a_j\rangle $. Thus, if a schedule satisfies the properties of Theorem~\ref{theorem:second_visits}, it holds that for all $a_p \in C_1$ and $a_q \in C_2$, positions $a_p$,\ $a_q$ contain primary visits of some nodes $u,v$ respectively, with $u<v$ and thus $d_u \leq d_v$. Combined with the fact that $a_p < a_q$, this implies that  $d_u' < d_v'$.
\end{proof}

Theorem~\ref{theorem:reduction} is the backbone of the algorithms we propose for $\twoV$ in the next subsections.

\begin{theorem}\label{theorem:reduction}
    $\twoV$ reduces in linear time to solving a $\PM$ instance for each cluster.
\end{theorem}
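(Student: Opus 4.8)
The plan is to show that, once attention is restricted to the canonical feasible schedules guaranteed by Theorem~\ref{theorem:second_visits}, the search for a feasible $\twoV$ schedule splits into independent subproblems, one per cluster, each of which is exactly a $\PM$ instance. I would begin by recalling the two structural facts that make this decomposition possible. First, by Lemma~\ref{lemma:clusters}, in any canonical schedule the primary visits of the nodes corresponding to a cluster $\langle a_p,\ldots,a_q\rangle$ occupy precisely the positions of that cluster, in some permutation; hence the choice of primary positions is made independently within each cluster, and by Observation~\ref{obs:clusters} the positions $\langle a_p,\ldots,a_q\rangle$ are themselves the discretized sequence of the cluster's deadlines $\langle d_p,\ldots,d_q\rangle$, so the $D$ and $A$ components of a $\PM$ instance are already in place. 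Second, I would observe that once all primary visits are fixed, placing the secondary visits validly is exactly a bipartite matching: a node $i$ with induced deadline $d_i'$ (Def.~\ref{def:induced_deadline}) can take gap $g$ as its secondary visit iff $g \le d_i'$, since a gap before the primary is automatically admissible while a gap after it must satisfy $g \le d_i + t_i = d_i'$ (and $g \ne t_i$ holds automatically because $g$ is a gap).

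The key step is to argue that the assignment of gaps to clusters is \emph{forced}. Listing the $n$ gaps in increasing order $g_1 < \cdots < g_n$, I would note that in a canonical schedule the secondary visits appear by non-decreasing induced deadline, so the node with the $j$-th smallest induced deadline takes gap $g_j$. By Lemma~\ref{lemma:secondary_sorted_by_cluster} the induced deadlines are totally ordered by cluster --- every node of an earlier cluster has a strictly smaller induced deadline than every node of a later one --- so the $m_1$ smallest induced deadlines are exactly those of the first cluster, the next $m_2$ those of the second, and so on, where $m_l = |C_l|$. Consequently the secondary visits of $C_l$ must occupy exactly the block of gaps $g_{s_l+1},\ldots,g_{s_l+m_l}$ with $s_l = \sum_{l'<l} m_{l'}$, a block determined solely by the cluster sizes. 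I would take these gaps to be the target set $T$ of the $\PM$ instance for $C_l$.

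With $D$, $A$, and $T$ fixed for each cluster, I would verify that the two $\PM$ constraints match the two $\twoV$ feasibility requirements exactly: matching a deadline $d$ to a position $a$ with $d \ge a$ encodes that the primary visit respects its deadline, while the requirement $d + a \ge t$ encodes that the induced deadline covers the assigned gap, i.e. the secondary visit is admissible. For the forward direction, a canonical feasible schedule yields, per cluster, the $D$--$A$ permutation given by its primary visits and the $D$--$T$ matching given by its secondary visits, and both $\PM$ constraints hold. For the backward direction, given a $\PM$ solution for every cluster I would build a schedule by placing each node's primary at its matched position and its secondary at its matched gap; since the cluster positions are exactly $A$ and the per-cluster blocks of gaps partition $[2n]\setminus A$, all $2n$ positions are filled bijectively, and the constraints $d \ge a$ and $d+a \ge t$ guarantee feasibility (this schedule need not itself be canonical, but feasibility is all that is required). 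Hence the instance is feasible iff every per-cluster $\PM$ instance is.

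Finally, I would check the linear running time of the reduction: the discretized sequence, the clusters, the sorted list of gaps, and the partition of gaps into per-cluster blocks obtained by prefix-summing the cluster sizes are all computable in $\bigO(n)$, and each node contributes to exactly one instance. The part I expect to require the most care is precisely the claim that the gap-to-cluster assignment is forced (rather than merely feasible for some assignment): it rests on combining the ``secondary visits sorted by induced deadline'' property of Theorem~\ref{theorem:second_visits} with the \emph{strict} cross-cluster ordering of Lemma~\ref{lemma:secondary_sorted_by_cluster}, and it is exactly what decouples the clusters and lets a single global secondary-visit matching factor into independent $\PM$ instances. Once this is established, the remainder of the reduction is routine bookkeeping.
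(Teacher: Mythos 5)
Your proposal is correct and follows essentially the same route as the paper's proof: it combines Theorem~\ref{theorem:second_visits} with Lemmas~\ref{lemma:clusters} and~\ref{lemma:secondary_sorted_by_cluster} to force the primary visits into cluster positions and the secondary visits into consecutive blocks of gaps determined by cluster sizes, and then identifies each cluster's subproblem with a $\PM$ instance whose constraints $d\geq a$ and $d+a\geq t$ encode the deadline and induced-deadline requirements. Your explicit treatment of the backward direction (noting the reconstructed schedule need not be canonical) and of the linear-time bookkeeping only spells out what the paper leaves implicit.
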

    
\begin{proof}
    By Theorem~\ref{theorem:second_visits}, we know that it suffices to place all secondary visits in gaps by non-decreasing induced deadline in order to find a feasible $\twoV$ schedule. By Lemma~\ref{lemma:secondary_sorted_by_cluster}, we know that, for a schedule satisfying the properties of Theorem~\ref{theorem:second_visits} and nodes $u,v$ such that the cluster containing $a_u$ precedes the cluster containing $a_v$: $d_u' < d_v'$. From all the above, we infer that, in such a schedule, the secondary visits of nodes corresponding to the first cluster $C_1$ of $A$ are placed in the first $|C_1|$ gaps of the schedule (by sorted induced deadline). Similarly, the secondary visits of nodes corresponding to the second cluster $C_2$ are placed in the next $|C_2|$ gaps, and so on.

    The above, in conjunction with Lemma~\ref{lemma:clusters}, implies that each cluster can be solved independently; the positions in which primary and secondary visits of nodes corresponding to a cluster must be placed do not coincide with the respective positions of other clusters. We will say that there is a \emph{partially feasible} schedule for a cluster $C$ if all of its corresponding nodes can be placed in these designated positions in an order that satisfies all their deadlines (for primary visits) and all their induced deadlines (for secondary visits).

    For a cluster $C=\langle a_p,\ldots,a_q\rangle $, let $t_1,t_2,\ldots,t_{q-p+1}$ be the gaps 
    in which the secondary visits of its corresponding nodes must be placed, according to the above. The primary visits of its corresponding nodes must be placed in positions $a_p,\ldots,a_q$, according to Lemma~\ref{lemma:clusters}. The position $a_i$ of a primary visit of a node $v$ is feasible if $d_v \geq a_i$. The secondary visits of nodes $p,\ldots,q$ can be placed feasibly if and only if there is a node with induced deadline at least $t_1$ (to be inserted into the gap $t_1$), another one with induced deadline at least $t_2$, and so on until $t_{q-p+1}$. It turns out that this is a $\PM$ instance with input sequence $D=\langle d_p,\ldots,d_q\rangle $, discretized sequence $C=\langle a_p,\ldots,a_q\rangle $\footnote{$C$ is the discretized sequence of $D$ by Observation~\ref{obs:clusters}.} and targets $T=\{t_1,t_2,\ldots,t_{q-p+1}\}$; deadlines have to be matched with smaller positions of the cluster and yield sufficiently large induced deadlines, for secondary visits to be placed in gaps. 
    This $\PM$ instance has a feasible solution if and only if there is a partially feasible schedule for cluster $C$.

    A feasible $\twoV$ schedule (satisfying the properties of Theorem~\ref{theorem:second_visits}) exists if and only if every cluster has a partially feasible schedule. Thus, by Theorem~\ref{theorem:second_visits}, solving $\twoV$ is reduced to solving a $\PM$ instance for each cluster.
\end{proof}

We now show how to solve the $\twoV$ instance from Example~\ref{example:non-sorted-first} using the reduction of Theorem~\ref{theorem:reduction}. Figure \ref{fig:matching} shows the \PM\ instances and Figure \ref{fig:schedule} shows the resulting schedule.

\begin{figure}[ht]
    \centering
    \includegraphics[scale=1.1]{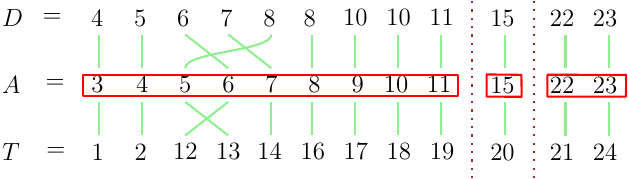}
    \caption{A $\twoV$ instance $D$, with its discretized sequence $A$ consisting of three clusters (included in red boxes). $T$ is the respective set of targets (gaps). Green lines among $D,A,T$ show triplets that constitute solutions to each of the three \PM\ instances obtained through Theorem~\ref{theorem:reduction}.}
    \label{fig:matching}
\end{figure}

\begin{figure}[ht]
    \centering
    \includegraphics[scale=1.2]{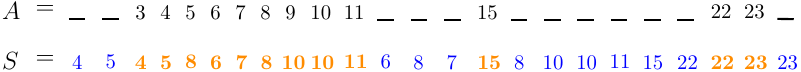}
    \caption{The sequence $A$ of Fig.~\ref{fig:matching} with its gaps and the schedule $S$ found by solving the \PM\ instances. Primary visits are denoted with orange (bold) and secondary visits with blue. For simplicity, we use the nodes' deadlines to show the schedule (instead of their indices).}
    \label{fig:schedule}
\end{figure}

\subsection{\twoV\ with distinct deadlines}


Let $D=\langle d_1,\ldots,d_n\rangle $ be a sequence containing distinct deadlines. For its discretized sequence $A=\langle a_1,\ldots,a_n\rangle $ it holds that $A=D$, by Definition~\ref{def:disc}. By Theorem~\ref{theorem:reduction}, the $\twoV$ instance $D$ reduces in linear time to solving a $\PM$ instance of the form $I=(D'=\langle d_p,\ldots,d_q\rangle ,A'=\langle a_p,\ldots,a_q\rangle ,T)$ for each cluster $A'$. Observe that $D'=A'$, which renders $I$ trivial through the restriction that $d\in D'$ cannot be matched with $a\in A'$ s.t. $d<a$: $d_p$ can only be matched with $a_p$ and, thus, $d_{p+1}$ can only be matched with $a_{p+1}$ and so on. The $\PM$ instance $I$ has a solution if and only if this unique feasible matching between $D'$ and $A'$ satisfies every target in $T$, leading to the following lemma.

\begin{lemma}
    A $\PM$ instance $(D,A,T)$ with $D$ containing only distinct numbers can be solved in $\bigO(n)$ time.
\end{lemma}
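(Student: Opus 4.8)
The plan is to prove the lemma constructively by exploiting the observation, already established in the text preceding the lemma, that when $D$ contains only distinct numbers its discretized sequence satisfies $A = D$. Under this condition the matching between $D$ and $A$ is essentially forced, and the only remaining question is whether the (unique) forced $D$-$A$ pairing produces induced sums large enough to cover all targets. My goal is to turn this structural rigidity into a linear-time greedy verification.

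First I would make the forced pairing explicit. Since $A = D$ and the $\PM$ definition forbids matching $d_i$ with any $a_j > d_i$, I would argue by a simple left-to-right (or exchange) argument that $d_i$ must be paired with $a_i$ for every $i$: $d_p = a_p$ can be matched only with $a_p$ (every later $a_j$ is strictly larger, and no earlier position exists), which frees up the problem for $d_{p+1}$, and so on inductively. This yields that the multiset of induced sums is exactly $\{d_1 + a_1, \ldots, d_n + a_n\} = \{2d_1, \ldots, 2d_n\}$, a sorted, fixed quantity requiring no search. Computing these $n$ sums costs $\bigO(n)$.

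The remaining task is to decide whether these $n$ fixed sums can be assigned to the $n$ distinct targets in $T$ so that each target $t$ is satisfied (i.e. covered by a sum $\geq t$). Because the sums are now fixed and we only need a perfect matching in a bipartite ``$\geq$'' relation, I would invoke the standard greedy/Hall-type criterion: sort both the sums $s_1 \le \cdots \le s_n$ and the targets $t_1 \le \cdots \le t_n$, and check that $s_i \ge t_i$ for every $i$. The correctness direction (this pointwise condition is necessary and sufficient for a feasible assignment) follows from a routine exchange argument — matching the smallest target to the smallest sufficient sum never hurts — which I would state and justify briefly rather than belabor. Since the sums arise already sorted (the $d_i$ are non-decreasing) and the targets can be radix/counting-sorted given the polynomial bound on their values, or are assumed sorted as part of the instance encoding, this check also runs in $\bigO(n)$.

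The main obstacle I anticipate is the sorting of the targets within a genuinely linear time bound: a comparison sort gives only $\bigO(n \log n)$, so to claim $\bigO(n)$ I must either assume the targets are presented in sorted order (consistent with how $D$ and $A$ are given as non-decreasing sequences) or argue that the target values are polynomially bounded integers (they are gaps in $[2n]$ in the $\twoV$ application, hence sortable in linear time by counting sort). I would flag this point explicitly and lean on the bounded-range argument so the linear-time claim is airtight, since everything else — the forced pairing and the single sorted pass of the $s_i \ge t_i$ test — is clearly $\bigO(n)$.
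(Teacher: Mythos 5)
Your proposal is correct and follows essentially the same route as the paper: since $D$ is distinct we have $A=D$, the no-larger-position constraint forces the unique matching $d_i \leftrightarrow a_i$, and the instance is feasible iff the resulting sums cover the targets. You are merely more explicit than the paper about the sorted pointwise check $s_i \ge t_i$ and the linear-time sorting of $T$ (which in the paper's intended application is automatic, since the targets are gaps in $[2n]$ arising in increasing order), so this is a faithful, slightly more detailed version of the same argument.
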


To construct the respective feasible schedule for the $\twoV$ instance $D$, we have to place the secondary visits in gaps by increasing induced deadline, according to Theorem~\ref{theorem:second_visits}. However, sorting them by induced deadline in this case is unnecessary, since the aforementioned matching always places the primary visits by increasing deadline, meaning that the nodes are already in the desired order. Thus, we obtain the following theorem.

\begin{theorem}\label{thrm:distinct_linear}
    A $\twoV$ instance with distinct deadlines can be solved in $\bigO(n)$ time. A respective schedule (if it exists) can be constructed in $\bigO(n)$ time as well.
\end{theorem}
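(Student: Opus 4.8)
The plan is to derive the theorem as an almost immediate consequence of Theorem~\ref{theorem:reduction} together with the preceding lemma on $\PM$ instances with distinct $D$, taking care only that every step — including the schedule construction — runs in linear rather than $\bigO(n\log n)$ time. First I would invoke Theorem~\ref{theorem:reduction} to reduce the $\twoV$ instance, in linear time, to one $\PM$ instance $I=(D'=\langle d_p,\ldots,d_q\rangle,\ A'=\langle a_p,\ldots,a_q\rangle,\ T)$ per cluster. The crucial simplification in the distinct case is that $A=D$ by Def.~\ref{def:disc}, so within each cluster $D'=A'$; the structural constraint that $d\in D'$ may only be matched to $a\in A'$ with $a\le d$ then forces the identity matching, since the smallest deadline $d_p=a_p$ can only be matched to position $a_p$, and inductively $d_{p+i}=a_{p+i}$ must be matched to $a_{p+i}$. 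Hence each per-cluster $\PM$ instance admits a unique candidate matching, and $I$ is feasible if and only if that matching satisfies every target in $T$.

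For the decision bound I would observe that computing the discretized sequence, partitioning it into clusters, and listing the $n$ gaps (the targets $T$) all take $\bigO(n)$ by Def.~\ref{def:disc} and Def.~\ref{def:cluster}. Solving each per-cluster instance via the preceding lemma costs time linear in the cluster size, and since the clusters partition the $n$ nodes the total is $\bigO(n)$. Thus the existence of a feasible schedule is decided in linear time.

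The schedule construction is where the one genuine subtlety lies. By the identity matching, the primary visit of node $i$ is placed at position $a_i=d_i$, so primaries already appear in non-decreasing deadline order. Theorem~\ref{theorem:second_visits} requires the secondary visits to be inserted into the gaps in non-decreasing order of induced deadline, and a naive sort would cost $\bigO(n\log n)$. I would avoid this by noting that, under the identity matching, node $i$ has induced deadline $d_i'=d_i+a_i$ (Def.~\ref{def:induced_deadline}), which is strictly increasing in $i$ because both $d_i$ and $a_i$ are; consequently the natural node order $1,\ldots,n$ already realizes the required order, and the secondaries can be dropped into the gaps by a single left-to-right pass. Feasibility of this placement is precisely the per-cluster target check performed above, so a valid schedule is produced whenever one exists, all in $\bigO(n)$.

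The main obstacle is not depth but bookkeeping: one must confirm that the sort mandated by Theorem~\ref{theorem:second_visits} is unnecessary here, which rests entirely on the monotonicity of the induced deadlines $d_i'=d_i+a_i$ forced by the identity matching. Everything else follows mechanically from the reduction of Theorem~\ref{theorem:reduction} and the distinct-$D$ lemma.
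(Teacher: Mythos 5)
Your proposal is correct and follows essentially the same route as the paper: reduce via Theorem~\ref{theorem:reduction} to per-cluster $\PM$ instances, observe that distinctness gives $A=D$ so the constraint $d\geq a$ forces the identity matching within each cluster, and check targets against that unique matching. Your handling of the schedule-construction subtlety is also the paper's: the identity matching makes the induced deadlines monotone in the node order, so the sort demanded by Theorem~\ref{theorem:second_visits} is unnecessary and a single left-to-right pass over the gaps suffices.
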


\subsection{\twoV\ with constant maximum cluster size}

Let $D$ be a $\twoV$ instance and $A$ its discretized sequence. Let $c$ be the size of the largest cluster of $A$. By Theorem~\ref{theorem:reduction}, $\twoV$ reduces in linear time to solving a $\PM$ instance of size $\bigO(c)$ for each cluster. Since the number of clusters is bounded by $n$, $\twoV$ can be solved in $\bigO(nc!)$ time by running a brute force algorithm for $\PM$.

\begin{corollary}
    There is an FPT algorithm for $\twoV$ parameterized by the maximum cluster size of the input's discretized sequence.
\end{corollary}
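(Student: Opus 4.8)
The plan is to assemble the corollary directly from the structural reduction already established, treating the maximum cluster size $c$ as the only source of super-polynomial cost. Concretely, I would first apply Theorem~\ref{theorem:reduction} to convert the given $\twoV$ instance, in linear time, into a family of independent $\PM$ instances, one for each cluster of the discretized sequence. The quantitative facts I would extract are that (i) every cluster has size at most $c$ by definition of the parameter, and (ii) since clusters are maximal runs of consecutive integers and are therefore pairwise disjoint, their number is at most $n$.

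Next, I would supply a brute-force solver for a single $\PM$ instance of size $m$ and bound its running time by $\bigO(m!)$ times a polynomial factor. The idea is that a solution to $\PM$ is determined by a bijection between the $m$ deadlines of $D'$ and the $m$ positions of $A'$; there are at most $m!$ such bijections, and for each one I would verify in polynomial time both that the constraint $d \geq a$ holds for every matched pair and that the resulting induced sums $d+a$ can be assigned to the $m$ targets. The latter is itself a feasibility test that does not require further enumeration: sorting the induced sums and the targets in non-decreasing order and checking that the $i$-th induced sum is at least the $i$-th target suffices, by a standard exchange argument. Correctness of the overall enumeration is immediate, since it exhausts the entire solution space of the $\PM$ instance.

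Finally, I would combine the two ingredients: each of the at most $n$ clusters yields a $\PM$ instance of size at most $c$, solved in $\bigO(c!)$ time, so the whole algorithm runs in $\bigO(n\,c!)$ time after the linear-time reduction. Since this is of the form $f(c)\cdot\mathrm{poly}(n)$ with $f(c)=c!$, it is an FPT algorithm in the parameter $c$. I do not expect a genuine technical obstacle here, as all the combinatorial difficulty has been absorbed into Theorem~\ref{theorem:reduction}; the only points deserving care are confirming that the parameter dependence is truly isolated inside the per-cluster brute force, so that the polynomial factor in $n$ does not secretly conceal a dependence on $c$, and checking that the per-matching feasibility test is polynomial rather than itself demanding enumeration.
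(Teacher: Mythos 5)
Your proposal is correct and follows essentially the same route as the paper: apply Theorem~\ref{theorem:reduction} to obtain one $\PM$ instance of size at most $c$ per cluster, solve each by brute force, and bound the total by $\bigO(n\,c!)$. The paper leaves the per-instance brute force unspecified, whereas you spell it out (enumerate the $D$--$A$ bijections and verify targets by a sorted greedy check), which is a valid and slightly more explicit instantiation of the same argument.
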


\subsection{\twoV\ with up to two distinct numbers per cluster}\label{subsec:two_distinct}

We will examine \PM\ instances with $D$ consisting of copies of up to two distinct numbers, and then transfer an algorithm to \twoV\ with a bounded amount of distinct deadlines through Theorem~\ref{theorem:reduction}.

\begin{observation}\label{obs:trivial_SM}
    A $\PM$ instance $(D,A,T)$ with $D$ only containing copies of a single number is trivial. It suffices to add that number to each $a \in A$ and check whether all $t \in T$ are satisfied.
\end{observation}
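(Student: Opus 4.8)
The plan is to show that when $D=\langle d,\ldots,d\rangle$ consists of $n$ copies of a single value $d$, both constraints of $\PM$ (Def.~\ref{def:SM}) collapse to a single, easily checkable condition. First I would analyze the structure of the discretized sequence $A$. By Def.~\ref{def:disc} we have $a_n = d_n = d$, and since every $d_i = d$, a short induction gives $a_i = a_{i+1}-1$ for all $i<n$; hence $A = \langle d-n+1, \ldots, d-1, d\rangle$ is a block of consecutive integers, all of which are at most $d$. Consequently the constraint $d \geq a$ of Def.~\ref{def:SM} holds for \emph{every} pair $(d,a)\in D\times A$, so it imposes no restriction on the $D$-$A$ matching.

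The second step is to observe that, because all elements of $D$ are equal, the multiset of induced sums does not depend on the chosen $D$-$A$ bijection: for any matching the sums are exactly $\{d + a : a \in A\}$. Thus the instance reduces to the pure assignment question of whether the sums $S = \{d + a_1, \ldots, d + a_n\}$ can be matched bijectively to the targets $T = \{t_1, \ldots, t_n\}$ so that every sum dominates its assigned target (that is, $d + a \geq t$). Since the $a_i$ are sorted, $S$ is already in non-decreasing order; I would then sort $T$ as $t_{(1)} < \cdots < t_{(n)}$ and claim that a feasible assignment exists if and only if $d + a_i \geq t_{(i)}$ for every $i$, which is precisely the check described in the observation.

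The only real content is justifying this sorted-dominance criterion, and I would prove it by a standard exchange/pigeonhole argument. Sufficiency is immediate: if $d+a_i \geq t_{(i)}$ for all $i$, then matching the $i$-th smallest sum to the $i$-th smallest target is feasible. For necessity, suppose a feasible assignment exists yet $d + a_k < t_{(k)}$ for some $k$; then each of the $k$ smallest sums $d+a_1, \ldots, d+a_k$ must be matched to a target strictly below $t_{(k)}$, but since $T$ consists of distinct integers there are only $k-1$ such targets, contradicting the pigeonhole principle. Because $S$ is already sorted, the whole test amounts to adding $d$ to each $a_i$, sorting $T$, and making one pointwise comparison, so I expect no genuine obstacle beyond this routine matching lemma — consistent with the observation's claim of triviality.
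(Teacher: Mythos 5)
Your proof is correct and follows exactly the reasoning the paper leaves implicit (the paper states this as an observation without proof): the constraint $d\geq a$ is vacuous because $A$ is the discretized sequence of a constant sequence, the multiset of sums $\{d+a : a\in A\}$ is independent of the chosen bijection, and the sorted pointwise-dominance check is justified by a standard exchange/pigeonhole argument. Your write-up simply fills in the details that the paper considers self-evident, so there is nothing to flag.
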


For the case of two distinct numbers, we will utilize the following auxiliary lemma.

\begin{lemma}\label{lem:SM_small_target}
    If for a $\PM$ instance $I=(D,A,T)$ there exists some $t \in T$ s.t. $t \leq \min(D) + \min(A)$, then it suffices to match $t$ with $\min(D)$ and $\min(A)$; if there is no solution with this choice, then there is no solution for $I$.
\end{lemma}

\begin{proof}
    Suppose $I$ has some solution $M$ that does not match $t$ with $\min(D)$ and $\min(A)$. We will prove that this implies the existence of a solution $M'$ for $I$ in which $t$ is matched with $\min(D)$ and $\min(A)$.

    First, consider the case in which $t$, $\min(D)$ and $\min(A)$ are all contained in different triplets in~$M$. Let $(d',a',t),(\min(D),a'',t'),(d'',\min(A),t'')$ be these triplets. We construct $M'$ by replacing these three triplets with the following three, while keeping the rest of $M$ intact: $(\min(D),\min(A),t),$ $(d',\max\{a',a''\},t'),(d'',\min\{a',a''\},t'')$. It remains to prove that these three triplets are feasible \PM\ triplets, i.e., that their second element is no larger than the first and the third element is no larger than the sum of the first two.

    \begin{itemize}
        \item By Def.~\ref{def:disc}, it holds that $\min(D)\geq \min(A)$. By assumption, $t \leq \min(D) + \min(A)$, hence the first triplet is feasible.
        
        \item By assumption, $(d',a',t)$ is a feasible triplet, implying $d'\geq a'$, and $(\min(D),a'',t')$ is a feasible triplet, implying $d' \geq \min(D)\geq a''$. Combining these, we get $d' \geq \max\{a',a''\}$. It also holds that $d'+\max\{a',a''\} \geq \min(D) + a'' \geq t'$ (with the latter inequality holding because $(\min(D),a'',t')$ is a feasible triplet), thus yielding that $(d',\max\{a',a''\},t')$ is feasible.
        
        \item By assumption, $(\min(D),a'',t')$ is a feasible triplet, implying $d'' \geq \min(D)\geq a'' \geq \min\{a',a''\}$. Additionally, $(d'',\min(A),t'')$ is a feasible triplet, implying that $d''+\min\{a',a''\} \geq d''+\min(A) \geq t''$, thus yielding that $(d'',\min\{a',a''\},t'')$ is feasible.
    \end{itemize}

    The proof for the case in which two out of the three elements $t$, $\min(D)$, $\min(A)$ are contained in the same triplet in $M$ is analogous (and simpler) and is thus omitted.
\end{proof}


We now discuss the case where $D$ consists of only two distinct deadlines in a \PM\ instance $(D,A,T)$. That is, $D=\langle d_1,\ldots,d_n\rangle $ contains $m$ copies of $x$ and $n-m$ copies of $y>x$.




\begin{lemma}\label{lem:2distinct_1cluster}
    A \PM\ instance $I=(D=\langle d_1,\ldots,d_n\rangle,A,T)$ with  $d_1=d_2=\ldots=d_m = x$, $d_{m+1}=\ldots=d_n = y$, $x<y$, can be solved in $\bigO(n)$ time.

\end{lemma}

\begin{proof}

    
    Let $a=\min(A)$, $t=\min(T)$ and check whether $a + x$ satisfies $t$. If it does, then $t$ is satisfied by any pair in $D \times A$ and  can thus be matched with $a$ and a copy of $x$ to obtain an equivalent $\PM$ instance, by Lemma~\ref{lem:SM_small_target}. If it does not, then $a$ cannot satisfy any target in $T$ when matched with a copy of $x$ and, therefore, must be matched with a copy of $y$. If $a+y < t$, then there is no feasible match for $a$, hence $I$ does not have a solution. If $a+y \geq t$, then we match~$a$ and a copy of $y$ with the largest target that they satisfy and remove them from the input; this choice is optimal, since any pair satisfies all targets smaller than one that it satisfies.

    We can apply the above iteratively for each $a \in A$ (in increasing order).\footnote{Note that, in subsequent repetitions, it might be the case that $a$ cannot be matched with $x$ due to $a > x$. If this occurs, and there is at least one unused copy of $x$ left, then $I$ clearly has no solution, since there are no feasible matches for $x$. Additionally, note that $a > y$ cannot occur, by definition of the problem.} In each step, we either match $a$ with some $d \in D$ and some $t\in T$ or we conclude that $I$ has no solution. Since $a+y$ strictly increases in each iteration, finding the largest target satisfied by $a+y$ takes $\bigO(n)$ time for all iterations combined, if we keep a pointer in $T$ for the last such target found previously.  The process terminates when all copies of $x$ or all copies of $y$ in $D$ are used up, since the remaining instance is then trivial by Observation~\ref{obs:trivial_SM}. This solves $I$ in $\bigO(n)$ time.
\end{proof}


Since \twoV\ reduces to solving a \PM\ instance for each cluster by Theorem~\ref{theorem:reduction}, we obtain the following through Observation~\ref{obs:trivial_SM} and Lemma~\ref{lem:2distinct_1cluster}.

\begin{theorem}\label{thrm:two_distinct_deadlines}
    A $\twoV$ instance that only has up to two distinct deadlines corresponding to each cluster of the input's discretized sequence can be solved in $\bigO (n)$ time.
\end{theorem}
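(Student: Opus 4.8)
The plan is to prove the theorem by a case analysis on the number of distinct values appearing in $D$, invoking the three results already established for each case and verifying that the total running time stays linear.

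First I would run the standard preprocessing used throughout this section. Compute the discretized sequence $A$ of $D$ in $\bigO(n)$ time (Def.~\ref{def:disc}); if $A$ contains a non-positive entry, report infeasibility immediately, by Lemma~\ref{obs:one-visit} together with Obs.~\ref{obs:density}. Likewise, strip any deadline exceeding $2n$, scheduling both of its visits at the end; this does not affect feasibility and preserves the ``at most two distinct numbers'' property. After this step we may assume $A$ is a strictly increasing sequence of positive integers, all deadlines are at most $2n$, and $D$ still has at most two distinct values.

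Next, count the distinct values of $D$ in a single linear scan. If there is at most one distinct value, the instance is solved in $\bigO(n)$ time by Corollary~\ref{cor:one_deadline}. Otherwise $D$ has exactly two distinct values $x<y$; let $m$ be the number of copies of $x$, so $1\le m\le n-1$. I would then inspect the two discretized positions $a_m$ and $a_{m+1}$, which are readily available from $A$. Since discretized sequences are strictly increasing, $a_{m+1}\ge a_m+1$ always holds, so the two possibilities $a_{m+1}=a_m+1$ and $a_{m+1}\ne a_m+1$ are exhaustive and mutually exclusive. In the former case I apply Lemma~\ref{lem:2distinct_1cluster}, and in the latter Lemma~\ref{lem:2distinct_2clusters}; each returns an answer (and a schedule, when one exists) in $\bigO(n)$ time.

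The only point worth verifying is that this dichotomy genuinely covers every two-distinct instance, which follows directly from Def.~\ref{def:disc}: because $d_m=x<y=d_{m+1}$, we have $a_m=\min\{a_{m+1}-1,\,d_m\}$, so either $a_m=a_{m+1}-1$ or $a_m=d_m<a_{m+1}-1$, matching exactly the hypotheses of the two lemmas. Hence there is no genuine obstacle beyond bookkeeping: every branch runs in linear time and the branches partition the input space, so the overall algorithm runs in $\bigO(n)$, as claimed.
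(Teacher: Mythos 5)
Your proof is correct and takes essentially the same route as the paper, whose entire proof is to combine Lemmas~\ref{lem:2distinct_2clusters}, \ref{lem:2distinct_1cluster} and Corollary~\ref{cor:one_deadline} in exactly this case analysis. The extra details you supply (the standard preprocessing and the verification via Def.~\ref{def:disc} that the cases $a_{m+1}=a_m+1$ and $a_{m+1}\neq a_m+1$ are exhaustive) are left implicit in the paper but match its argument.
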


\section{The computational complexity of \twoV}\label{sec:two_visits_hardness}



We define the following numerical matching problem, to be used as an intermediate step in our reduction. This problem may be of independent interest as a variation of $\NTDMshort$.

\begin{definition}[$\INTDMshort$]\label{def:numerical_ineq}
    Given a multiset of positive integers $A=\{a_1,\ldots,a_n\}$, the set $B=\{1,\ldots,n\}$ and a multiset of positive integers (targets) $T=\{t_1,\ldots,t_n\}$, the \INTDM\ $(\INTDMshort)$ problem asks whether there is a subset $M$ of $A \times B \times T$ s.t. every $a_i \in A$, $b \in B$, $t_i \in T$ occurs exactly once in $M$ and for every triplet $(a,b,t)\in M$ it holds that $a+b \geq t$.
\end{definition}

In this section we present a three-step reduction, from \RNTDMshort\ \cite{Flow_shop_Yu} to \INTDMshort\, to \PM\ and, finally, to \twoV. We consider this our main technical contribution.

\begin{remark}
    Membership in NP is trivial for most problems considered in this paper and is thus always omitted. The exception is $\PWS$ (and its generalizations in Section~\ref{sec:hardness_extra}), for which membership in NP, NP-hardness and PSPACE-completeness are all open questions ever since its introduction in 1989~\cite{Holte_Pinwheel}.
\end{remark}

\subsection{Reducing \RNTDMshort\ to \INTDMshort}

We will use the following restricted version of $\NTDMshort$ for our reduction, defined by Yu, Hoogeveen and Lenstra~\cite{Flow_shop_Yu}. Its constraints are vital for the reduction to $\PM$ in the next subsection.

\begin{definition}[$\RNTDMshort$]\label{def:rn3dm}
    Given a multiset of positive integers $A=\{a_1,\ldots,a_n\}$, sets $B=C=\{1,\ldots,n\}$ and an integer $\sigma$ such that
    $$\sum_{i=1}^n (a_i+2i) = n\sigma,$$
    the $\RNTDM$ $(\RNTDMshort)$ problem asks whether there is a subset $M$ of $A \times B \times C$ s.t. every $a_i \in A$, $b \in B$, $c \in C$ occurs exactly once in $M$ and for every triplet $(a,b,c)\in M$ it holds that $a+b+c=\sigma$.
\end{definition}

\begin{theorem}[Yu et al. 2004~\cite{Flow_shop_Yu}]\label{thrm:num_hardness}
    $\RNTDMshort$ is strongly NP-complete.
\end{theorem}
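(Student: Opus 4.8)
The plan is to reduce from $\NTDMshort$, which the excerpt states is strongly NP-complete. Membership in NP is immediate: a certificate is the matching $M$ (equivalently, the two permutations selecting the $B$- and $C$-elements), and its validity is checked in polynomial time. So the real task is to transform an arbitrary $\NTDMshort$ instance $(A,B,C,\sigma_0)$, with $|A|=|B|=|C|=n$ and all values bounded by a polynomial $V=V(n)$ (the strong/unary regime), into an $\RNTDMshort$ instance whose two index sets are \emph{forced} to be $\{1,\dots,N\}$.

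The central difficulty is that $\RNTDMshort$ fixes $B'=C'=\{1,\dots,N\}$, so the only freedom in a solution is the choice of two permutations $\pi,\rho$ of $[N]$ with $a_i'+\pi(i)+\rho(i)=\sigma'$; the arbitrary values $b_j,c_k$ of the source instance must therefore be smuggled into the constructed values $a_i'$, while the identity sets $[N]$ serve only to select a consistent position. I would realise this with a scaling-and-padding gadget over a base $W$ chosen larger than any attainable partial sum (this is safe precisely because all source values and $N$ remain polynomial): each constructed number is split into a high-order \emph{value channel} carrying the $b$/$c$ contributions and a low-order \emph{index channel}, and I would add filler $A'$-elements, together with a suitably enlarged ground set $[N]$, so that every index of $[N]$ is consumed exactly once. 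The filler elements and the target $\sigma'$ are designed so that the index channels cancel automatically, leaving the value channels to enforce exactly the original constraint $a+b+c=\sigma_0$.

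Correctness then splits into the two standard directions. A solution of the source instance yields permutations $\pi,\rho$ by placing each real $a_i'$ against the indices encoding its matched $b_j,c_k$ and completing with the forced filler matches; conversely, the base separation guarantees that in any $\RNTDMshort$ solution the value channels must balance independently, which forces a valid source matching. Before invoking Theorem~\ref{thrm:num_hardness} I would also verify the definitional consistency constraint $\sum_{i=1}^N (a_i'+2i)=N\sigma'$ of Def.~\ref{def:rn3dm} (equivalently, that $\sigma'=\tfrac{1}{N}\sum_i a_i' + (N+1)$), and confirm that every constructed number, as well as $N$ itself, stays polynomial in $n$ and $V$, which is what upgrades ordinary NP-hardness to \emph{strong} NP-hardness.

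The main obstacle is exactly the gadget design: arranging the filler elements and the base $W$ so that the index channels are genuinely forced, with no ``cross-talk'' between channels that could create spurious solutions, while simultaneously keeping $W$ and $N$ polynomially bounded. An exponential base would be the easy way to separate channels, but it would destroy strong NP-hardness, so the delicate part is making both the forward and backward simulations go through using only polynomially bounded numbers. This is precisely where the rigid structure baked into Def.~\ref{def:rn3dm}, namely that both $B$ and $C$ equal $\{1,\dots,n\}$ and the shift term $2i$, does the heavy lifting.
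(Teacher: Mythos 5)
First, a framing note: the paper contains no proof of this statement at all---Theorem~\ref{thrm:num_hardness} is imported verbatim from Yu et al.~\cite{Flow_shop_Yu}, whose proof is itself a reduction from \NTDMshort, so your choice of source problem is consistent with the literature. Judged as a standalone proof, however, your proposal has a genuine gap: the gadget that constitutes the entire technical content of the theorem is never constructed. You describe what it should accomplish (a high-order ``value channel'' and a low-order ``index channel'' over a polynomial base $W$, filler elements, an enlarged ground set $[N]$, a target $\sigma'$ making the index contributions cancel), but you never define $N$, $W$, the filler values, or $\sigma'$, and you prove neither direction of the equivalence. You yourself flag the crux---preventing cross-talk between channels while keeping $W$ and $N$ polynomially bounded---and explicitly leave it unresolved; that crux \emph{is} the theorem, so what you have is a research plan, not a proof.

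Second, the concrete reason the plan cannot simply be waved through: in \RNTDMshort\ the sets $B=C=\{1,\dots,N\}$ are forced to be a contiguous interval, so you have no freedom whatsoever to design the index values. Every integer in $[N]$ occurs exactly once on each side of the matching, including all integers whose ``channel decomposition'' is meaningless, so the channel structure must be enforced solely through $A'$ and $\sigma'$, with fillers absorbing every non-representing index. With a polynomially bounded base, residue arithmetic alone does not force real elements onto representing indices: a real element can be matched with two non-representing indices whose offsets compensate (say residues $\alpha+1$ and $\alpha'-1$ summing to the required residue), while fillers silently absorb the displaced representing indices. Ruling out such spurious solutions requires an explicit counting or exchange argument---precisely the step your sketch defers, and precisely what the cited proof supplies. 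A smaller but real issue: the condition $\sum_{i=1}^{N}(a_i'+2i)=N\sigma'$ of Def.~\ref{def:rn3dm} is not something to ``verify before invoking'' the theorem; it is a well-formedness constraint on your construction, so the fillers must be chosen to make $\sum_i a_i'$ divisible by $N$ with $\sigma'$ landing on the intended target value, otherwise the constructed object is not an \RNTDMshort\ instance at all.
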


\begin{theorem}\label{thrm:reduction_1}
    $\RNTDMshort$ reduces to $\INTDMshort$ in polynomial time.
\end{theorem}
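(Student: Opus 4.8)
The plan is to keep the multiset $A$ and the set $B=\{1,\ldots,n\}$ unchanged and to replace the equality constraint of $\RNTDMshort$ by a carefully chosen target multiset for $\INTDMshort$. Concretely, from an instance $(A,B,C,\sigma)$ of $\RNTDMshort$ I would build the $\INTDMshort$ instance $(A,B,T)$ with $T=\{\sigma-1,\sigma-2,\ldots,\sigma-n\}=\{\sigma-c : c\in C\}$, which is computable in linear time. Before anything else I would record that the promise $\sum_{i=1}^n(a_i+2i)=n\sigma$ together with $a_i\geq 1$ forces $n\sigma\geq n+n(n+1)=n(n+2)$, hence $\sigma\geq n+2$ and every target $\sigma-c\geq 2$ is a positive integer, so $(A,B,T)$ is a legal $\INTDMshort$ instance.

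The forward direction is immediate: given a matching $M$ for $\RNTDMshort$, each triplet satisfies $a+b+c=\sigma$, so mapping $(a,b,c)\mapsto(a,b,\sigma-c)$ turns the equality $a+b=\sigma-c$ into the (tight) inequality $a+b\geq\sigma-c$; since $c\mapsto\sigma-c$ is a bijection from $C$ onto $T$, this yields a valid $\INTDMshort$ matching.

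The interesting direction, where the main idea lies, is the converse, which I would prove by a global summation and tightness argument. Suppose $\INTDMshort$ has a matching with $a+b\geq t$ on every triplet. Summing over all $n$ triplets gives $\sum_{a\in A}a+\sum_{b\in B}b\geq\sum_{t\in T}t$. I would then compute both sides in closed form: the left-hand side equals $\sum a_i+\tfrac{n(n+1)}{2}=n\sigma-\tfrac{n(n+1)}{2}$ (using the promise to substitute $\sum a_i=n\sigma-n(n+1)$), while $\sum_{t\in T}t=\sum_{c=1}^n(\sigma-c)=n\sigma-\tfrac{n(n+1)}{2}$. Thus the two sides coincide, so the aggregate inequality is tight; since each individual constraint $a+b\geq t$ already holds, equality of the sums forces $a+b=t$ on \emph{every} triplet. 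Setting $c:=\sigma-t\in\{1,\ldots,n\}=C$ (again a bijection, now from $T$ back onto $C$) then gives $a+b+c=\sigma$ for every triplet, i.e., a valid $\RNTDMshort$ matching.

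The one place requiring care, and which I expect to be the crux, is verifying that the constant $n\sigma-\tfrac{n(n+1)}{2}$ matches on both sides: this is exactly where the otherwise opaque normalization $\sum_{i=1}^n(a_i+2i)=n\sigma$ of $\RNTDMshort$ is used, and it is what makes all the inequalities collapse to equalities simultaneously. Everything else (positivity of the targets, the bijections between $C$ and $T$, and the polynomial running time) is routine.
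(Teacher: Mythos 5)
Your proposal is correct and follows essentially the same route as the paper: the identical construction $T=\{\sigma-c : c\in C\}$, with the normalization $\sum_{i=1}^n(a_i+2i)=n\sigma$ used to show that the sum of $a+b$ over any matching equals the sum of the targets, forcing every inequality $a+b\geq t$ to be tight and hence equivalent to the equality constraint of $\RNTDMshort$. Your additional check that the targets are positive is a minor detail the paper leaves implicit.
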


\begin{proof}
    Let $I=(A,B,C,\sigma)$ be a $\RNTDMshort$ instance. Define the set $T=\{\sigma-i \mid i\in C\}$. By the $\RNTDMshort$ constraints, it holds that

    \begin{equation}
    \label{eq:ineq_eq}
            \sum_{i=1}^n (a_i+i) = n\sigma - \sum_{i=1}^n i = \sum_{i=1}^n t_i.
    \end{equation}

    Define the $\INTDMshort$ instance $I'=(A,B,T)$. $I'$ is a yes-instance of $\INTDMshort$ if and only if there is a subset $M$ of $A\times B \times T$ s.t. every $a_i \in A$, $b_i \in B$, $t_i \in T$ occurs exactly once in $M$ and for every triplet $(a,b,t)\in M$ it holds that $a+b\geq t$. By~(\ref{eq:ineq_eq}), $a+b\geq t$, $\forall(a,b,t)\in M$ is equivalent to $a+b= t$, $\forall(a,b,t)\in M$. Equivalently, there is a subset $M'$ of $A\times B \times C$ s.t. every $a_i \in A$, $b_i \in B$, $c_i \in C$ occurs exactly once in $M'$ and for every triplet $(a,b,c)\in M'$ it holds that $a+b= \sigma-c$. Thus, $I'$ is a yes-instance of $\INTDMshort$ if and only if $I$ is a yes-instance of $\RNTDMshort$.
\end{proof}

\begin{lemma}\label{lem:range}
    $\INTDMshort$ is strongly NP-complete, even when $\max(A)-\min(A) \leq 2n-2$ and $T$ is a set (instead of a multiset).
\end{lemma}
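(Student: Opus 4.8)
The plan is to observe that the reduction already constructed in Theorem~\ref{thrm:reduction_1}, after a trivial preprocessing step, produces $\INTDMshort$ instances that automatically satisfy both additional restrictions. Since $\RNTDMshort$ is strongly NP-complete (Theorem~\ref{thrm:num_hardness}) and this reduction does not blow up the magnitudes of the numbers involved, establishing these two structural facts about its output suffices to prove the lemma; NP-membership is immediate.

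First I would argue that $T$ is always a set. Recall that the reduction of Theorem~\ref{thrm:reduction_1} sets $T = \{\sigma - i \mid i \in C\}$ with $C = \{1,\ldots,n\}$. Because the elements of $C$ are pairwise distinct, the values $\sigma - i$ are pairwise distinct as well, so $T$ contains no repeated element and is therefore a genuine set rather than a multiset. Hence the ``$T$ is a set'' restriction is met by construction, with no change to the reduction required.

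Next I would establish the range bound on $A$ via a WLOG preprocessing step. In any $\RNTDMshort$ instance we have $B = C = \{1,\ldots,n\}$ (Def.~\ref{def:rn3dm}), so every triplet $(a,b,c)$ appearing in a valid matching satisfies $a = \sigma - b - c$ with $b + c \in [2, 2n]$; hence any $a$ that participates in a solution must lie in $[\sigma - 2n,\ \sigma - 2]$. Consequently, an instance containing some $a_i$ outside this interval cannot be a yes-instance, and such an instance can be detected and mapped to a fixed trivial no-instance in polynomial time. Restricting attention to the surviving instances, every $a_i \in [\sigma - 2n,\ \sigma - 2]$, which yields $\max(A) - \min(A) \leq (\sigma - 2) - (\sigma - 2n) = 2n - 2$. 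Since the reduction of Theorem~\ref{thrm:reduction_1} copies $A$ verbatim into the $\INTDMshort$ instance, this bound is inherited by the output.

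Combining the two observations, the preprocessed reduction is a polynomial-time mapping from $\RNTDMshort$ to $\INTDMshort$ whose image lies entirely within the restricted class where $T$ is a set and $\max(A) - \min(A) \leq 2n - 2$, proving strong NP-hardness for that class. The only points needing care are verifying that the bound $2n-2$ is exactly the width of the natural interval of matchable $a$-values, and confirming that discarding out-of-range instances is sound (i.e., preserves yes/no status); both are routine, so I do not anticipate a genuine obstacle here. If, as may well be the case, the original construction of Yu et al.~\cite{Flow_shop_Yu} already guarantees $a_i \in [\sigma - 2n, \sigma - 2]$, the preprocessing step becomes unnecessary and the bound follows directly.
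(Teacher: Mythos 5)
Your proposal is correct and follows essentially the same route as the paper: the paper likewise notes that $T=\{\sigma-i \mid i\in C\}$ is automatically a set, and argues that since every triplet in a valid $\RNTDMshort$ solution sums to $\sigma$ while $b+c\in[2,2n]$, any instance with $\max(A)-\min(A)>2n-2$ is a trivial no-instance, so the bound may be assumed and is inherited because the reduction of Theorem~\ref{thrm:reduction_1} copies $A$ verbatim. Your phrasing via the interval $[\sigma-2n,\sigma-2]$ is just a restatement of the paper's pigeonhole on the triplets containing $\max(A)$ and $\min(A)$.
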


\begin{proof}
    The reduction of Theorem~\ref{thrm:reduction_1} already forces $T$ to be a simple set. It remains to prove that $\INTDMshort$ is strongly NP-complete, even when $\max(A)-\min(A) \leq 2n-2$.

    Let $I=(A,B,C,\sigma)$ be a $\RNTDMshort$ instance, where $B=C=\{1,\ldots,n\}$. In any $\RNTDMshort$ solution, the triplet containing $\max(A)$ cannot have a sum smaller than $\max(A)+2$ and the triplet containing $\min(A)$ cannot have a sum larger than $\min(A)+2n$. For a valid $\RNTDMshort$ solution, all triplets $(a,b,c)\in M$ must have sum equal to $\sigma$ and, thus, any $\RNTDMshort$ instance with $\max(A)-\min(A) > 2n-2$ is a trivial no-instance. Hence, we can assume that $\max(A)-\min(A) \leq 2n-2$ for the reduction of Theorem~\ref{thrm:reduction_1}.

    The reduction of Theorem~\ref{thrm:reduction_1} does not modify $A$, which implies that the respective $\INTDMshort$ instance will also satisfy $\max(A)-\min(A) \leq 2n-2$. In conjunction with Theorem~\ref{thrm:num_hardness}, this proves the lemma.
\end{proof}

\subsection{Reducing \INTDMshort\ to \PM}

For the following reduction we need to pay attention to the constraints of the $\PM$ problem (Def.~\ref{def:SM}). Specifically, for a $\PM$ instance $(D,A,T)$:

\begin{itemize}
    \item $A$ is the discretized sequence of $D$.
    \item $T$ only contains distinct positive integers.
    \item Elements in $D$ cannot be matched with larger elements of $A$.
\end{itemize}

Lemma~\ref{lem:range} will be useful for ensuring that these constraints are satisfied. The main challenge in the reduction of this subsection is to force the second set to be the discretized sequence of the first one, while simultaneously enforcing equivalence between the two instances despite the fact that $\INTDMshort$ allows elements of the first set to be matched with larger elements. We first prove an auxiliary lemma.

\begin{lemma}\label{lem:INTDM_prop}
    Any $\INTDMshort$ instance $I=(A,B,T)$ satisfying the properties of Lemma~\ref{lem:range} can be reduced in polynomial time to an \INTDMshort\ instance $I'=(A',B,T')$ such that:
    
    \begin{enumerate}
        \item $\min(A')=n$.
        \item $\max(A')<3n$.
        \item $\max(T')<4n$.
        \item $T'$ contains distinct elements.
    \end{enumerate}
\end{lemma}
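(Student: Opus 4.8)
The plan is to exploit a simple \emph{shift invariance} of $\INTDMshort$: since the only constraint on a triple is $a+b\ge t$, adding the same integer $\delta$ to every element of $A$ and every element of $T$ (leaving $B=\{1,\dots,n\}$ untouched) leaves the feasibility of each triple $(a,b,t)$ unchanged, because $a+b\ge t \iff (a+\delta)+b\ge (t+\delta)$. Hence such a shift preserves the set of feasible matchings and, in particular, the answer. I would take $\delta=n-\min(A)$ and set $A'=\{a+\delta:a\in A\}$, which forces $\min(A')=n$ (Property~1). Property~2 is then immediate from Lemma~\ref{lem:range}: since $\max(A)-\min(A)\le 2n-2$, we get $\max(A')=\min(A')+(\max(A)-\min(A))\le n+(2n-2)=3n-2<3n$.

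For the targets, I would first dispose of over-large ones. Any target $t>\max(A)+n$ cannot be satisfied by a single pair, since the largest achievable sum is $\max(A)+\max(B)=\max(A)+n$; the instance is then a no-instance, and I output a fixed canonical no-instance, e.g.\ $A'=\{n,\dots,n\}$ ($n$ copies) and $T'=\{2n+1,\dots,3n\}$, which trivially meets all four properties (every matched sum is at most $2n$). Otherwise $\max(T)\le\max(A)+n$, and the shifted targets satisfy $\max(T)+\delta\le \max(A')+n\le (3n-2)+n=4n-2<4n$, giving Property~3.

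The remaining, and most delicate, point is to keep $T'$ a set of \emph{positive} integers (Property~4) even though a downward shift ($\delta<0$) may drive small targets to non-positive values. The key observation is that a target $t$ with $t\le\min(A)$ (equivalently $t+\delta\le n$) is trivially satisfied by \emph{every} pair, both in the original instance (as $\min(A)+\min(B)=\min(A)+1>t$) and in any instance with $\min(A')=n$ (as $n+1>n$ is at least its new value); this is the $\INTDMshort$ analogue of Observation~\ref{obs:SM_small_target}. I would therefore keep each ``large'' target $t$ (those with $t+\delta>n$) as $t+\delta$, and reassign the ``small'' targets to distinct values in $\{1,\dots,n\}$. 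Shifting preserves distinctness among the large targets; the large targets all exceed $n$ and so cannot collide with the reassigned small ones; and there are at most $n$ small targets, so enough distinct slots remain. Thus $T'$ is a set of $n$ distinct positive integers below $4n$. Since large targets are handled by a value-preserving shift and small targets are always satisfiable in both instances, the feasible matchings of $I$ and $I'$ coincide, establishing the equivalence, and the whole construction is clearly computable in linear time.

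The main obstacle I anticipate is precisely this small-target bookkeeping: guaranteeing positivity and distinctness of $T'$ after a downward shift without disturbing which matchings are feasible, together with the clean treatment of over-large targets via the canonical no-instance.
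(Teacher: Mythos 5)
Your proposal is correct, and its skeleton matches the paper's: both exploit the shift invariance of the constraint $a+b\ge t$ to force $\min(A')=n$, both dismiss instances with $\max(T)>\max(A)+n$ as trivial no-instances, and both rest on the observation that a target $t\le\min(A)$ is satisfied by every pair. The genuine difference is in how the small targets are neutralized. The paper removes them iteratively: each $t\le\min(A)$ is matched with $\min(A)$ and $\min(B)=1$, these three elements are deleted, and all remaining elements of $B$ and $T$ are decremented by $1$; this shrinks $n$ at every step and requires an exchange-style argument (that a solution, if one exists, can be assumed to contain the triple $(\min(A),1,t)$) together with a check that the properties of Lemma~\ref{lem:range} survive the iterations. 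You instead keep $n$ fixed and reassign each small target to an unused value in $\{1,\dots,n\}$, which is harmless in both directions precisely because such targets are satisfied by every pair in either instance; this avoids the exchange argument and the re-indexing of $B$ entirely, and it conforms more literally to the lemma's statement, in which $B$ is unchanged. Your bookkeeping for Property~4 is also sound: shifted large targets exceed $n$ and inherit distinctness from $T$ being a set (guaranteed by Lemma~\ref{lem:range}), reassigned small targets lie in $\{1,\dots,n\}$ and there are at most $n$ of them, so no collisions occur and positivity is automatic even after a downward shift. Both routes are valid; the paper's trades your in-place reassignment for instance-shrinking preprocessing, while yours buys a fixed $n$ and a simpler equivalence argument at the cost of slightly more careful target bookkeeping.
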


\begin{proof}
    First, assume that for some $\INTDMshort$ instance $I=(A,B,T)$, $B= \{1,\ldots,n\}$, there is some $t \in T$ s.t. $t \leq \min(A)$. Then, $t$ can be satisfied by every pair $(a,b)\in A \times B$. We observe that if $I$ has a solution, then it has a solution that matches $\min(A)$ and $\min(B)$ with $t$. Thus, we can remove $\min(A)$, $\min(B)=1$ and $t$ from the input and subtract $1$ from all remaining elements of $B$ and $T$. 
    The new $\INTDMshort$ instance has a solution if and only if $I$ has a solution. We can apply the aforementioned process repeatedly until $t > \min(A)$, $\forall t\in T$. Note that this process preserves the properties of Lemma~\ref{lem:range}.

    Let $I=(A,B,T)$ be an $\INTDMshort$ instance, satisfying $t > \min(A)$, $\forall t\in T$, as well as the properties of Lemma~\ref{lem:range}. Observe that, if $\max(T) > \max(A) + n$, then $\max(T)$ cannot be satisfied by any $(a,b)\in A\times B$, rendering $I$ a trivial no-instance. Hence, for our reduction we can assume
    \begin{equation}\label{eq:INTDM}
        \max(T) \leq \max(A) + n.
    \end{equation}

    We now transform $I$ as follows to obtain an equivalent $\INTDMshort$ instance $I'=(A',B,T')$, while preserving the properties of Lemma~\ref{lem:range}.
    
    \begin{itemize}
        \item If $\min(A) < n$, increase all $a \in A$ and all $t \in T$ by $n - \min(A)$.
        \item If $\min(A) > n$, decrease all $a \in A$ and all $t \in T$ by $\min(A) - n$. Note that no element in $T$ will become non-positive by this operation, since $t > \min(A)$, $\forall t \in T$.
    \end{itemize}
    
    We obtain sets $A',\ T'$ with $\min(A')=n$ and $\max(A')-\min(A')\leq 2n-2$, implying $\max(A')< 3n$. At this point, we have proven the first two desired properties for $I'$.
    For the third property, we use~(\ref{eq:INTDM}) to obtain $\max(T') \leq \max(A') + n < 3n +n = 4n$. By construction, $T'$ contains distinct elements, thus proving all four desired properties.

    Regarding the correctness of the reduction: since we added the same value to all $a \in A$ and all $t \in T$, any triplet $(a,b,t)\in A\times B\times T$ that satisfied $a+b \geq t$ will satisfy the respective inequality with the modified elements and vice versa, implying that $I$ and $I'$ are equivalent. 
\end{proof}

\begin{theorem}\label{thrm:reduction_2}
    $\INTDMshort$ reduces to $\PM$ in polynomial time.
\end{theorem}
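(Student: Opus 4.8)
The plan is to first normalize the input via Lemma~\ref{lem:INTDM_prop}, so that we may assume the given \INTDMshort\ instance $I'=(A',B,T')$ satisfies $\min(A')=n$, $\max(A')<3n$, $\max(T')<4n$, and $T'$ is a set of distinct integers (with $B=\{1,\ldots,n\}$). Writing $\alpha=\max(A')$, the crucial consequence is that every element of $A'$ lies in $[n,\alpha]\subseteq[n,3n)$, so any element of $A'$ added to an element of $B$, or to another element of $A'$, stays strictly below $6n$. The idea is to let the elements of $A'$ play the role of the deadlines $D$ in the target \PM\ instance, realize $B=\{1,\ldots,n\}$ as the $n$ smallest entries of the discretized sequence, and pad with many large, identical \emph{dummy} deadlines whose only purpose is to (i) force the discretized sequence to be exactly $\langle 1,2,\ldots,N\rangle$ and (ii) render every remaining position unusable by the real deadlines. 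Note that after Lemma~\ref{lem:INTDM_prop} the constraint ``$d\geq a$'' of \PM\ is harmless on the real part, since $a_i\geq n$ dominates every element of $\{1,\ldots,n\}$.

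Concretely, I would set $N=5n$ and define $D$ as the non-decreasing sequence consisting of the $n$ numbers of $A'$ followed by $4n$ copies of the value $5n$. Since $d_i\geq i$ for every index $i$ (the real deadlines are $\geq n\geq i$ and the dummies equal $5n$) and $d_N=5n=N$, a one-line induction on Def.~\ref{def:disc} shows that the discretized sequence of $D$ is $A=\langle 1,2,\ldots,5n\rangle$; in particular $A$ is a genuine discretized sequence, as \PM\ requires. I would keep the original targets $T'$ (the ``real'' targets, all $<4n$) and add $4n$ ``dummy'' targets $\{6n+1,6n+2,\ldots,10n\}$, so that $T=T'\cup\{6n+1,\ldots,10n\}$ is a set of $5n$ distinct integers. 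This defines the \PM\ instance $(D,A,T)$, computable in linear time with all numbers bounded by $10n$, so the reduction preserves strong NP-hardness.

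For soundness, the forward direction is routine: from an \INTDMshort\ solution matching each $a_i\in A'$ to a distinct $b\in\{1,\ldots,n\}$ with $a_i+b\geq t$, I place the real deadlines on positions $1,\ldots,n$ by the same bijection (the constraint $d\geq a$ holds since $a_i\geq n\geq b$, and the target inequality is identical), and place the $4n$ dummies on positions $n+1,\ldots,5n$, assigning the dummy at position $a$ the target $5n+a$. The reverse direction carries the real work, and is the main obstacle: because \PM\ only forbids matching a deadline with a \emph{strictly larger} position, a real deadline close to $3n$ could in principle be sent to a high position and produce a large sum, so I must rule out that real deadlines ``leak'' into the dummy positions or capture dummy targets.

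The resolution — and the reason for the specific magnitudes — is a counting argument. Since a real deadline $d$ may only be matched to a position $a\leq d\leq\alpha<3n$, every triplet containing a real deadline has sum $d+a\leq 2\alpha<6n$; hence no real deadline can satisfy any dummy target, all of which are $\geq 6n+1$. As there are exactly $4n$ dummy targets and exactly $4n$ dummy deadlines, the dummies must cover precisely the dummy targets and the reals the targets in $T'$. Since every dummy has value $5n$, satisfying the target $10n$ forces a dummy onto position $5n$, then $10n-1$ forces one onto position $5n-1$, and so on; inductively the dummies occupy exactly the positions $\{n+1,\ldots,5n\}$, which pins the real deadlines to $\{1,\ldots,n\}$. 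Restricting such a solution to the real deadlines yields a bijection between $A'$ and $\{1,\ldots,n\}$ with $a_i+b\geq t$ over the targets of $T'$, i.e.\ exactly an \INTDMshort\ solution of $I'$. Combined with Lemma~\ref{lem:INTDM_prop}, this establishes the equivalence and completes the reduction.
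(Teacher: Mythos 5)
Your proposal is correct and follows essentially the same approach as the paper's proof: normalize via Lemma~\ref{lem:INTDM_prop}, pad with identical large dummy deadlines so that the discretized sequence becomes $\langle 1,\ldots,N\rangle$, add a consecutive block of large dummy targets, and use a top-down induction to force the dummies onto the top positions, pinning the real deadlines to $\{1,\ldots,n\}$. The only differences are cosmetic parameter choices (the paper uses $3n$ dummies of value $4n$ with targets $5n+1,\ldots,8n$, while your slightly more generous magnitudes make the step ``no real deadline satisfies any dummy target'' immediate by counting rather than folded into the induction).
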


\begin{proof}
    By Lemmas~\ref{lem:range},~\ref{lem:INTDM_prop}, we may consider an $\INTDMshort$ instance $I=(A,B,T)$, $B=\{1,\ldots,n\}$, satisfying the four properties of Lemma~\ref{lem:INTDM_prop}.  
    We construct a \PM\ instance $I^*=(A^*,B^*,T^*)$ based on $I$ as follows.

    \begin{itemize}
        \item $A^*$ is a non-decreasing sequence consisting of the elements in $A$ and $3n$ copies of $4n$.
        \item $B^*$ is an increasing sequence consisting of the elements in $B$ and elements $n+1,\ldots,4n$.
        \item $T^*$ is a set containing the elements in $T$, along with elements $5n+1, \ldots, 8n$.
    \end{itemize}
    
    $A^*, B^*, T^*$ have size $4n$ each. We call the $3n$ new elements added to each set \emph{dummy} elements. Observe that all dummy elements are larger than all respective non-dummy elements, by properties~$2,3$ of Lemma~\ref{lem:INTDM_prop}.
    Note that $T^*$ only contains distinct positive integers, by properties~$3,4$ of Lemma~\ref{lem:INTDM_prop}. Hence, in order to prove that $I^*$ is a valid $\PM$ instance, it suffices to prove that $B^*$ is the discretized sequence of $A^*$. 
    
    $A^*$ contains $3n$ copies of $4n$, implying that the $3n$ largest elements of its discretized sequence are $n+1,\ldots,4n$. For all non-dummy elements $a\in A^*$ it holds that $n \leq a < 3n$, by properties~$1,2$ of Lemma~\ref{lem:INTDM_prop}; hence, we obtain that the discretized sequence of $A^*$ is $\langle 1,\ldots,4n \rangle$, which is exactly equal to $B^*$. This proves that $I^*$ is a valid $\PM$ instance. We will now prove that $I^*$ has a solution if and only if $I$ has a solution.

    For any solution of $I^*$, the target $8n$ can only be satisfied by matching one of the dummies in $A^*$ (of value $4n$) with the largest dummy in $B^*$. Suppose we remove the respective triplet from the input. Inductively, the largest remaining dummy target in $T^*$ can only be satisfied by matching a dummy of $A^*$ (of value $4n$) with the largest available dummy in $B^*$. Thus, for any feasible solution of $I^*$, all dummies of $A^*$ must be matched with dummies of $B^*$, satisfying all dummy targets $t\in T^*$.

    We infer that $I^*$ has a solution if and only if there is a partial solution for its non-dummy elements, i.e., the non-dummy elements $a\in A^*, b \in B^*, t\in T^*$ can be matched in triplets $(a,b,t)$ s.t. $a\geq b$ and $a+b \geq t$. However, the first of these two inequalities holds for all non-dummy $(a,b)$ pairs, since $a \geq n$ for all non-dummy $a \in A^*$ and $b \leq n$ for all non-dummy $b \in B^*$. This implies that any solution of $I$ is a feasible partial solution for the non-dummy elements in $I^*$ (and vice versa, with the other direction following immediately from the definitions of the two problems). From all the above, we obtain that $I$ has a solution if and only if $I^*$ has a solution. The reduction described is polynomial-time, since only $3n$ elements were added to each set.\footnote{This is the reason why we had to bound the maximum value of $A$ through Lemma~\ref{lem:range}; otherwise, the reduction would need to add an exponential amount of elements.}
\end{proof}

\begin{corollary}\label{cor:consecutive}
    $\PM$ is NP-complete even when the discretized sequence consists of consecutive numbers and all input numbers are bounded by $\bigO(n)$.
\end{corollary}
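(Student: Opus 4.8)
The plan is to read the desired property directly off the instance constructed in the reduction of Theorem~\ref{thrm:reduction_2}, so no new construction is needed. Recall that in that reduction an $\INTDMshort$ instance is mapped to a $\PM$ instance $I^*=(A^*,B^*,T^*)$, where $A^*$ plays the role of the deadline sequence $D$ in Def.~\ref{def:SM} and $B^*$ plays the role of the discretized sequence $A$. The key observation I would make is that the proof of Theorem~\ref{thrm:reduction_2} already establishes that $B^*=\langle 1,\ldots,4n\rangle$: indeed, it is shown there that the discretized sequence of $A^*$ equals $\langle 1,\ldots,4n\rangle$, which is exactly $B^*$. Since $\langle 1,\ldots,4n\rangle$ is a sequence of consecutive integers, every $\PM$ instance produced by that reduction already satisfies the ``consecutive discretized sequence'' restriction for free.

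It then remains only to confirm that hardness transfers in the strong sense. I would invoke Lemma~\ref{lem:range}, which gives strong NP-completeness of $\INTDMshort$ (even after the preprocessing of Lemma~\ref{lem:INTDM_prop}), and then verify that the reduction of Theorem~\ref{thrm:reduction_2} keeps every number polynomially bounded in the instance size: the elements of $A^*$ are at most $4n$, those of $B^*$ at most $4n$, and those of $T^*$ at most $8n$. Consequently the reduction maps an instance with polynomially bounded numbers to one with polynomially bounded numbers in polynomial time, so strong NP-completeness is preserved. Membership in NP is immediate (as noted in the preceding remark), so $\PM$ is strongly NP-complete even under the stated restriction.

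I do not anticipate a genuine obstacle here, as the statement is essentially a corollary recording a structural feature of the already-completed reduction. The only point that deserves a careful sentence is the chaining of Lemmas~\ref{lem:range} and~\ref{lem:INTDM_prop} before applying Theorem~\ref{thrm:reduction_2}: one must check that the normalization performed in Lemma~\ref{lem:INTDM_prop} (which guarantees $\min(A')=n$, $\max(A')<3n$, $\max(T')<4n$) is exactly what forces the discretized sequence of $A^*$ to collapse to $\langle 1,\ldots,4n\rangle$ and keeps all magnitudes $\bigO(n)$. Once this is spelled out, the corollary follows directly.
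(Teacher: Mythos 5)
Your proposal is correct and follows exactly the paper's reasoning: the paper states this as an immediate corollary of Theorem~\ref{thrm:reduction_2}, whose proof already establishes that the constructed instance $I^*$ has discretized sequence $B^*=\langle 1,\ldots,4n\rangle$ (consecutive numbers) with all values bounded by $8n$, so strong NP-completeness carries over from $\INTDMshort$ via Lemma~\ref{lem:range}. Your added care about chaining Lemmas~\ref{lem:range} and~\ref{lem:INTDM_prop} is exactly the implicit bookkeeping the paper relies on.
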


\subsection{Reducing \PM\ to \twoV}

The following observation follows immediately from Def.~\ref{def:SM} and is an important tool for the reductions presented below.

\begin{observation}\label{obs:sum_matching_increment}
    Let $(D,A,T)$ be a $\PM$ instance and $c$ be any positive integer. Suppose we increase every $e \in D \cup A$ by $c$ and every $t \in T$ by $2c$, obtaining modified sets $D',A',T'$. Recall that $A$ is the discretized sequence of $D$ (see Def.~\ref{def:SM}). Observe that $A'$ is the discretized sequence of $D'$ and thus $(D',A',T')$ is a valid $\PM$ instance. $(D',A',T')$ is a yes-instance of $\PM$ if and only if $(D,A,T)$ is a yes-instance of $\PM$.
\end{observation}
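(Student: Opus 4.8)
The plan is to establish the observation in two parts: first the structural claim that $A'$ is the discretized sequence of $D'$ (so that $(D',A',T')$ is a legitimate $\PM$ instance), and then the equivalence of yes-instances.

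For the first part, I would argue by downward induction on the index $i$ that $a_i' = a_i + c$, where $A' = \langle a_1',\ldots,a_n'\rangle$ denotes the discretized sequence of $D'$. The base case $i=n$ is immediate from Def.~\ref{def:disc}, since $a_n' = d_n + c = a_n + c$. For the inductive step with $i<n$, assuming $a_{i+1}' = a_{i+1} + c$, I would compute
\[
a_i' = \min\{a_{i+1}' - 1,\, d_i + c\} = \min\{(a_{i+1}-1)+c,\, d_i + c\} = \min\{a_{i+1}-1,\, d_i\} + c = a_i + c,
\]
using that $\min$ commutes with adding a fixed constant to both of its arguments. This yields $A' = A + c$ elementwise. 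Since $T'$ is obtained from the distinct-element set $T$ by adding the same constant $2c$ to each entry, $T'$ remains a set of distinct positive integers, and $D'$ is still non-decreasing. Hence $(D',A',T')$ satisfies all the requirements of Def.~\ref{def:SM}.

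For the equivalence, I would exhibit the natural index-preserving correspondence between matchings: a triplet $(d_i, a_j, t_k)\in D\times A\times T$ maps to $(d_i + c,\, a_j + c,\, t_k + 2c)\in D'\times A'\times T'$. Since this map is a bijection that preserves which elements are grouped together, any perfect matching $M$ for one instance induces a candidate matching $M'$ for the other with the same combinatorial structure, and it remains only to check that the $\PM$ constraints transfer. The inequality $d\geq a$ is invariant under adding $c$ to both sides, and the key algebraic point is that $d+a\geq t$ is equivalent to $(d+c)+(a+c)\geq t+2c$, since both sides gain exactly $2c$; this is precisely why the targets must be shifted by $2c$ rather than by $c$. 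Both directions of the biconditional follow from this same computation.

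There is no genuine obstacle here; the only points requiring slight care are confirming that the discretized-sequence operation commutes with a uniform additive shift, and matching the shift of $2c$ on the targets against the combined shift of $c+c$ on the two summands $d$ and $a$. Once these two elementary facts are verified, the observation follows directly.
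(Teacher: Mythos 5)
Your proposal is correct and follows exactly the reasoning the paper treats as immediate: the paper states this as an observation "following immediately from Def.~\ref{def:SM}" with no written proof, and your argument simply verifies the two elementary facts that implicit claim rests on (that discretization commutes with a uniform additive shift, via $\min\{x+c,y+c\}=\min\{x,y\}+c$, and that the target shift of $2c$ matches the combined shift $c+c$ on the summands $d$ and $a$). Nothing is missing; your write-up is a faithful, slightly more detailed rendering of the paper's intended argument.
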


\begin{lemma}\label{lem:SM_red}
    Any $\PM$ instance $I=(D,A,T)$ can be reduced to a \PM\ instance $I'=(D',A',T')$ such that $t>a$, $\forall t \in T', a\in A'$, in polynomial time.
\end{lemma}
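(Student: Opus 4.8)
The plan is to obtain $I'$ from $I$ by a single uniform shift, exploiting Observation~\ref{obs:sum_matching_increment}. The key insight is that the observation increments the targets by $2c$ but the elements of $D$ and $A$ by only $c$; hence increasing $c$ \emph{widens} the gap between the targets and the discretized sequence, and a large enough shift pushes every target strictly above every element of $A$ while preserving both validity and equivalence of the instance. So the entire reduction reduces to choosing the shift constant correctly.

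Concretely, I would set $c=\max(A)$ (in fact any $c$ with $c>\max(A)-\min(T)$ works, and $c\ge\max(A)$ certainly does since targets are positive) and define $D',A',T'$ by applying Observation~\ref{obs:sum_matching_increment} with this $c$, i.e.\ $D'=\{d+c:d\in D\}$, $A'=\{a+c:a\in A\}$, $T'=\{t+2c:t\in T\}$. By the observation, $A'$ is automatically the discretized sequence of $D'$, so $I'$ is a valid $\PM$ instance, and $I'$ is a yes-instance if and only if $I$ is. It then remains only to verify the target property: for every $t\in T'$ and $a\in A'$ we have $t\ge\min(T)+2c$ and $a\le\max(A)+c$, whence $t-a\ge\min(T)+c-\max(A)\ge\min(T)>0$, giving $t>a$ as required. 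Finally, since $c=\max(A)$ is bounded by the largest input value, every element grows by at most $2\max(A)$, so the encoding size at most roughly triples and the transformation runs in (linear) polynomial time.

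I do not expect a genuine obstacle here, since Observation~\ref{obs:sum_matching_increment} does the heavy lifting of preserving both the discretized-sequence structure and the yes/no answer, leaving only the one-line choice of $c$. The single point requiring care is to confirm that the chosen shift makes the \emph{strict} inequality $t>a$ hold for all pairs \emph{simultaneously} — equivalently $\min(T')>\max(A')$ — and that the numbers do not blow up beyond polynomial size; both follow from taking $c=\max(A)$ together with the fact that targets are positive integers, so $\min(T)\ge1$.
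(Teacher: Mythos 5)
Your proposal is correct and follows essentially the same route as the paper: the paper's proof also applies Observation~\ref{obs:sum_matching_increment} with shift $c=a_n=\max(A)$, and concludes from $\max(A')=2a_n<\min(T')$ (since targets are positive) that the strict inequality holds. Your additional verification of equivalence, validity, and polynomial size is just a more explicit spelling-out of the same one-line argument.
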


\begin{proof}
    We use Observation~\ref{obs:sum_matching_increment} to add $a_n$ to each $e\in D\cup A$ and $2a_n$ to each $t \in T$, obtaining an equivalent $\PM$ instance $I'=(D',A',T')$. We have $\max(A')=2a_n < \min(T')$, which proves the lemma.
\end{proof}

\begin{observation}\label{obs:consecutive}
    If Lemma~\ref{lem:SM_red} is applied to a $\PM$ instance $I=(D,A,T)$ such that $A$ consists of consecutive numbers, then $A'$ will also consist of consecutive numbers.
\end{observation}

We now present our main theorem, whose proof utilizes results from Section~\ref{sec:two_visits_algo}.


\begin{theorem}[Main Theorem]\label{thrm:reduction_3}
    $\PM$ reduces to $\twoV$ in polynomial time.
\end{theorem}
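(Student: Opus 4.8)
The plan is to reverse the reduction of Theorem~\ref{theorem:reduction}: given a \PM\ instance, I would build a \twoV\ instance whose discretized sequence decomposes into clusters so that one distinguished cluster reproduces exactly the given \PM\ instance, while every other cluster is a ``dummy'' that always admits a partial schedule. By Theorem~\ref{theorem:reduction} a \twoV\ instance is feasible iff each of its clusters is partially feasible, so once the dummy clusters are arranged to be unconditionally feasible, feasibility of the \twoV\ instance becomes equivalent to solvability of the embedded \PM\ instance.

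First I would normalize the input with the tools already developed. By Corollary~\ref{cor:consecutive} and Observation~\ref{obs:consecutive} I may assume the discretized sequence $A$ is a block of consecutive integers, by Lemma~\ref{lem:SM_red} that every target exceeds $\max(A)$, and by Observation~\ref{obs:sum_matching_increment} I can shift $D\cup A$ up by an arbitrary amount $c$ (and $T$ by $2c$) to create slack. In the resulting instance the real nodes' primary visits are meant to sit at the positions $A$, and the targets $T$ should become precisely the gaps that Theorem~\ref{theorem:reduction} allocates to this cluster.

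The construction then has three ingredients. First, positions below $\min(A)$ are unavoidable, so I would fill them with a \emph{staircase} of dummy singletons occupying alternate positions, with deadlines chosen so that their induced deadlines cover the interleaved low gaps; these clusters precede the real cluster and consume exactly the spurious low gaps, guaranteeing that the real cluster receives the first gaps above $\max(A)$, i.e.\ the targets $T$. Second, every non-target position above the cluster is filled by a dummy \emph{filler}, so that the only gaps there are $T$ together with a reserved block of gaps placed at the very top. Third, the fillers that fall between $\max(A)$ and $\min(T)$ are absorbed into the real cluster; here the \PM\ constraint $d\ge a$ is the key lever, since the real deadlines (all at most $\max(A)$) cannot reach the higher filler positions, forcing the real nodes onto $A$ and, using that the normalized instance satisfies $\max(T)=\max(D)+\max(A)=2\max(A)$, onto the targets $T$, while the fillers are driven to the reserved top block. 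A sufficiently large shift $c$, polynomial in the input, makes every dummy reachable, so all dummy (sub-)clusters are feasible by construction. Equivalence then follows from Theorems~\ref{theorem:second_visits} and~\ref{theorem:reduction}.

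The hard part is this last step: aligning the gaps allocated to the real cluster with the \emph{arbitrary} set $T$. Two features of \twoV\ fight against it. The discretized sequence necessarily creates extra positions both below $\min(A)$ and between $\max(A)$ and $\min(T)$, and, because the discretized sequence is computed from the global deadline order, the set of gaps a node can reach is rigidly tied to its position, so that a low-deadline dummy cannot serve a high gap and vice versa. The decoupling argument that separates the ``real'' part of the merged cluster (forced onto $T$, exploiting the tightness $\max(T)=2\max(A)$) from the ``filler'' part (absorbing all remaining gaps), together with the verification that a single polynomial shift simultaneously secures reachability for every dummy, is the crux of the proof; the surrounding counting is routine.
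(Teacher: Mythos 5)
Your proposal takes essentially the same approach as the paper's proof: the same normalization (Corollary~\ref{cor:consecutive}, Observation~\ref{obs:consecutive}, Lemma~\ref{lem:SM_red}, Observation~\ref{obs:sum_matching_increment}), the same low ``staircase'' of singleton dummy clusters on alternate positions below $\min(A)$ (the paper's deadlines $s_i=2i-1$), the same high fillers occupying every non-target position of $[\max(A)+1,\,2\max(A)]$ with their secondary visits pushed into a reserved block beyond $2\max(A)$, and the same forcing argument --- real deadlines are at most $d_n=\max(A)$, so real primaries land on $A$ and real secondaries on $T$ --- with equivalence concluded via Theorems~\ref{theorem:second_visits} and~\ref{theorem:reduction}. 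The one inaccuracy is your claim that the normalized instance satisfies $\max(T)=2\max(A)$: the paper only guarantees $\max(T)\le 2\max(A)$ (anything larger is a trivial no-instance), and this inequality is all your forcing argument actually needs.
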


\begin{proof}
    We begin by restricting the $\PM$ problem to specific instances.
    
    By Corollary~\ref{cor:consecutive}, Lemma~\ref{lem:SM_red} and Observation~\ref{obs:consecutive}, we may assume a $\PM$ instance $I=(D=\langle d_1,\ldots,d_n\rangle ,\ A=\langle a_1,\ldots,a_n\rangle ,\ T=\{t_1,\ldots,t_n\})$ such that $A$ is the discretized sequence of $D$ and consists of consecutive numbers and $t > a$, $\forall t \in T, a\in A$. Additionally, $t_i \neq t_j$, $\forall i\neq j$ $(i,j \in [n])$, by Def.~\ref{def:SM}. We assume $D,A,T$ are all sorted in non-decreasing order. 
    If $a_1$ is even, we use Observation~\ref{obs:sum_matching_increment} to add $1$ to each $e\in D\cup A$ and $2$ to each $t \in T$, obtaining an equivalent $\PM$ instance. From now on, we assume that $a_1$ is odd.

    By Def.~\ref{def:disc}, it holds that $d_n=a_n$, which implies that the largest possible sum (of an element in $D$ and an element in $A$) is $2a_n$. If $t_n > 2a_n$, then $I$ is a trivial no-instance, since $t_n$ is impossible to satisfy. From now on, we assume that $t_n \leq 2a_n$.

    We will now build an instance $D'$ of $\twoV$ from an instance $I=(D,A,T)$ of \PM\ with the aforementioned assumptions. We let $D'= S || D || L$\footnote{'$||$' denotes sequence concatenation.}, where:
    
    \begin{itemize}
        \item $S$ contains \emph{small} deadlines $s_i = 2i - 1$, $i \in [(a_1-1)/2]$. Note that these are all the odd integers smaller than $a_1$.
        \item $L$ contains \emph{large} deadlines $l_i$, one for each number $l \in [a_n+1, 2a_n]\setminus T$. Since $t_n \leq 2a_n$, it holds that $T \subseteq [a_n+1, 2a_n]$, so every number in $[a_n+1, 2a_n]$ is either in $L$ or in $T$.
    \end{itemize}

    Note that $|S|$ and $|L|$ are bounded by $\bigO(n)$ through Corollary~\ref{cor:consecutive}.
    We will prove that $I$ is equivalent to $D'$ by proving that all visits of nodes $s_i \in S$ and $l_i \in L$ can be placed feasibly in a $\twoV$ schedule satisfying the properties of Theorem~\ref{theorem:second_visits}, regardless of the placement of the visits of nodes $d_i\in D$. For convenience, we will say ``nodes $s_i$'' instead of ``nodes with deadlines $s_i$'' throughout this reduction (same for $d_i$, $l_i$).

    Let $A'$ be the discretized sequence of $D'$. Observe that all $s_i \in S$ and $l_i \in L$ are distinct and they are smaller or larger than all $a_i \in A$ respectively. Hence, $A'$ consists of all $s_i\in S$, all $a_i \in A$ and all $l_i\in L$, by Def.~\ref{def:disc}. Since all $s_i\in S$ are distinct odd numbers smaller than $a_1$, each $s_i\in S$ forms a cluster by itself and all even-numbered positions up to $a_1$ are gaps. By Theorem~\ref{theorem:second_visits}, we know that the $\twoV$ instance $D'$ has a feasible schedule if and only if it has a feasible schedule $F$ satisfying the properties mentioned in that theorem. By Lemmas~\ref{lemma:clusters},~\ref{lemma:secondary_sorted_by_cluster}, we know that the following hold for $F$:

    \begin{enumerate}
        \item The primary visit of node $s_i$ is placed in position $s_i$, for all $s_i \in S$.
        \item The secondary visit of node $s_i$ is placed in the first available gap, i.e., in position $s_i+1$, for all $s_i \in S$.
    \end{enumerate}

    The above satisfies both the deadlines and the induced deadlines of nodes $s_i\in S$. Since we have assumed $a_1,\ldots,a_n$ are consecutive numbers, they all belong to the same cluster $C$ of $A'$. $C$ also contains all $l_i\in L$ s.t. $l_i < t_1$, by definition. By Theorem~\ref{theorem:second_visits} and Lemma~\ref{lemma:clusters}, $F$ must place the primary visits of nodes $d_i\in D$ and $l_i < t_1$ collectively in a permutation of the positions $a_i\in A$ and $l_i < t_1$. However, since $l_i > a_n = d_n$, $\forall l_i\in L$, the primary visits of nodes $d_i\in D$ cannot be placed in positions $l_i < t_1$. Thus, primary visits of nodes $d_i\in D$ have to be placed in a permutation of the positions $a_i\in A$ and primary visits of nodes $l_i < t_1$ have to be placed in a permutation of the positions $l_i < t_1$.

    We place the primary visit of each node $l_i\in L$ in position $l_i$. Thus, the induced deadline of the node $l_1=\min(L)$ is $2l_1 > 2a_n \geq t_n$. This implies that the secondary visit of the node $l_1$ can be placed in position $2a_n +1$ of the schedule; this position is larger than all deadlines in $D'$, hence it cannot be occupied by any primary visit. Since $l_i\in L$ are distinct by definition, the induced deadline of $l_{i+1}$ is larger than the induced deadline of $l_i$ by at least $2$, which implies that the secondary visit of $l_{i+1}$ can be placed directly after the secondary visit of $l_i$, for all $i\in [|L|-1]$. Inductively, this process places all secondary visits of nodes $l_i\in L$ in feasible positions.

    At this point, we have shown that there is a feasible placement of all primary and secondary visits of nodes $s_i\in S$ and $l_i\in L$, satisfying the properties of $F$ imposed by Theorem~\ref{theorem:second_visits} and Lemmas~\ref{lemma:clusters},~\ref{lemma:secondary_sorted_by_cluster}. $F$ is further required to place the primary visits of nodes $d_i\in D$ in a permutation of the positions $a_i\in A$ and the respective secondary visits in the first $n$ gaps that have not been used by preceding clusters (i.e., by the secondary visits of nodes $s_i\in S$), in order of non-decreasing induced deadline. Observe that these $n$ gaps are still available; they are exactly the $n$ positions\footnote{This is why we insist on $T$ being a set and not a multiset throughout our reductions; we need $n$ distinct positions.} $t_i \in T\subseteq [a_n+1, 2a_n]$, which have not been used by nodes $l_i\in L$. This is because $l_i \notin T$ by definition and the primary visits of nodes $l_i\in L$ are placed in positions $l_i$; also, the secondary visits of $l_i\in L$ are placed in positions greater than $2a_n$, which are strictly greater than every $t\in T$.

    We are now ready to show equivalence between $I$ and $D'$. By the above, it is shown that a schedule $F$ with the properties of Theorem~\ref{theorem:second_visits} exists for $D'$ if and only if the primary visits of nodes $d_i \in D$ can be placed in a permutation of the positions $a_i\in A$, s.t. there is some sufficiently large induced deadline for each gap $t_i \in T$ and each $d_i$ is placed in a position smaller than or equal to itself. This corresponds exactly to the $\PM$ instance $I=(D,A,T)$. By Theorem~\ref{theorem:second_visits}, the existence of $F$ is equivalent to the feasibility of $D'$. Thus, $I$ has a solution if and only if $D'$ has a feasible schedule, which completes the reduction.    
\end{proof}

\begin{corollary}[Main Result]\label{cor:main}
    $\twoV$ is strongly NP-complete.
\end{corollary}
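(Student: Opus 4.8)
The plan is to assemble the chain of reductions developed throughout this section into a single hardness argument, together with the (essentially trivial) observation of NP-membership. First I would note that $\twoV$ is in NP: a candidate schedule has length exactly $2n$, so it constitutes a polynomial-size certificate, and verifying that every primary visit meets its deadline and every secondary visit meets its induced deadline can be done in $\bigO(n)$ time (consistent with the remark that NP-membership is trivial for all problems here except $\PWS$).

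For hardness, I would invoke the transitivity of polynomial-time reductions along the path $\RNTDMshort \to \INTDMshort \to \PM \to \twoV$. Concretely, Theorem~\ref{thrm:num_hardness} gives that $\RNTDMshort$ is strongly NP-complete; Theorem~\ref{thrm:reduction_1}, refined by Lemmas~\ref{lem:range} and~\ref{lem:INTDM_prop}, reduces it to $\INTDMshort$; Theorem~\ref{thrm:reduction_2} reduces $\INTDMshort$ to $\PM$, yielding Corollary~\ref{cor:consecutive}; and finally the Main Theorem (Theorem~\ref{thrm:reduction_3}) reduces $\PM$ to $\twoV$. Composing these polynomial-time reductions immediately shows that $\twoV$ is NP-hard, so at the level of ordinary NP-completeness the corollary is a one-line consequence of Theorem~\ref{thrm:reduction_3}.

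The one point requiring care --- and what I regard as the actual content of the claim --- is that the reductions preserve the \emph{strong} qualifier, i.e.\ that the numbers occurring in the final $\twoV$ instance remain polynomially bounded in the size of the original $\RNTDMshort$ instance. I would verify this by tracking magnitudes through the chain: Lemma~\ref{lem:INTDM_prop} already forces all elements of $A'$ and $T'$ into $\bigO(n)$; the construction in Theorem~\ref{thrm:reduction_2} only introduces dummy values bounded by $8n$; and the reduction of Theorem~\ref{thrm:reduction_3} produces deadlines bounded by $2a_n$, which is again $\bigO(n)$ given the bounds inherited from the preceding steps. Since every intermediate value stays polynomial in the instance size, strong NP-hardness is transferred all the way to $\twoV$. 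Combined with NP-membership, this establishes that $\twoV$ is strongly NP-complete.
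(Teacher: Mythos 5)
Your proof is correct and follows exactly the paper's (implicit) argument: the corollary is obtained by composing the reduction chain $\RNTDMshort \to \INTDMshort \to \PM \to \twoV$ (Theorems~\ref{thrm:num_hardness}, \ref{thrm:reduction_1}, \ref{thrm:reduction_2}, \ref{thrm:reduction_3}) with trivial NP-membership. Your explicit tracking of number magnitudes to confirm that strong NP-hardness is preserved is a point the paper leaves implicit (it is built into Lemmas~\ref{lem:range} and~\ref{lem:INTDM_prop} and the constructions themselves), so this is a welcome but not divergent addition.
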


\begin{remark}
    $\twoV$ is NP-complete only if multisets are allowed as input. For simple sets, it can be solved in linear time, by Theorem~\ref{thrm:distinct_linear}. Note that, indeed, our reduction from \INTDMshort\ to \PM\ (Theorem~\ref{thrm:reduction_2}) pads the input with multiple copies of the number $4n$.
\end{remark}

\section{Hardness for variants of \kV\ and \PWS}\label{sec:hardness_extra}

In this section we transfer the NP-hardness of \twoV\ to variations of \kV\ and \PWS.

\subsection{\kV\ with varying deadlines}

We start with a generalization of \kV\ in which the deadline of each node is not constant throughout the schedule, but \emph{varies} depending on how many times that node has been visited.

\begin{definition}[$\varkV$]\label{def:var_kV}
    Given $nk$ positive integers $d_{ij}$, $i\in [n],\ j\in [k]$, the \emph{Variable} $\kV$ problem $(\varkV)$ asks whether there exists a schedule of length $nk$, containing each $i \in [n]$ exactly $k$ times, with the constraint that the $j$-th occurrence of $i$ is at most $d_{ij}$ positions away from the previous one (or $d_{i1}$ positions from the beginning of the schedule, if $j=1$).
\end{definition}

\begin{theorem}\label{thrm:hardness_generalized}
    $\varkV$ is strongly NP-complete for all $k\geq 2$, even when $d_{i1}=d_{i2}$ and $d_{ij}=d_{ij'}$, $\forall j,\, j'\in [3,k]$, for all nodes $i\in [n]$.
\end{theorem}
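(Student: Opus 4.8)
The plan is to reduce from $\twoV$, which is strongly NP-complete by Corollary~\ref{cor:main}. Membership in NP is immediate, since a feasible schedule is a polynomial-size certificate that is checkable in polynomial time. Given a $\twoV$ instance $D=\langle d_1,\ldots,d_n\rangle$, I would build a $\varkV$ instance on the same $n$ nodes by setting $d_{i1}=d_{i2}=d_i$ and $d_{ij}=nk$ for every $j\in[3,k]$ and every $i\in[n]$. This construction manifestly meets the promised restriction ($d_{i1}=d_{i2}$, and $d_{i3},\ldots,d_{ik}$ all equal), introduces only polynomially bounded numbers, and runs in polynomial time, so strong NP-hardness will transfer. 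For the base case $k=2$ the map is essentially the identity, since $\varkV$ with $k=2$ and $d_{i1}=d_{i2}=d_i$ is exactly $\twoV$.

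For the forward direction I would take a feasible $\twoV$ schedule of length $2n$, keep it verbatim in the first $2n$ positions of the $\varkV$ schedule, and append the remaining $k-2$ visits of every node in the positions $2n+1,\ldots,nk$ in arbitrary order (there are exactly $(k-2)n$ such positions). The first two visits of each node already respect the deadlines $d_{i1}=d_{i2}=d_i$, and every later visit $j\geq 3$ is separated from its predecessor by at most $nk-1<nk=d_{ij}$ positions, so all the remaining constraints hold.

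The backward direction is the only nontrivial step and I expect it to be the main obstacle. Given a feasible $\varkV$ schedule, I would delete every position occupied by a $j$-th visit with $j\geq 3$ and shift the surviving entries leftward to close the resulting gaps, obtaining a schedule of length $2n$ that retains exactly the first two visits of each node. The key observation is that such a \emph{contraction} can only move events earlier: if the first visit of node $i$ sat at position $p_{i1}\leq d_i$, its new position is at most $p_{i1}\leq d_i$, and the number of surviving positions strictly between the two kept visits of $i$ is at most $p_{i2}-p_{i1}\leq d_i$, so the post-contraction gap between them is also at most $d_i$. Hence both the first-visit deadline and the second-visit deadline are preserved, and the resulting length-$2n$ schedule is feasible for the original $\twoV$ instance. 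Combining the two directions yields the equivalence; together with NP membership and the fact that all produced numbers are polynomially bounded, this establishes that $\varkV$ is strongly NP-complete for every $k\geq 2$ under the stated restriction on the deadlines.
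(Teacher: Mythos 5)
Your proposal is correct and takes essentially the same route as the paper: reduce from $\twoV$ by keeping $d_{i1}=d_{i2}=d_i$ and making all later deadlines a polynomially bounded large value (you use $nk$, the paper uses $3n$), append the remaining $(k-2)n$ visits after the length-$2n$ schedule for the forward direction, and for the converse contract a feasible $\varkV$ schedule to the first two visits of each node, noting that contraction only moves visits earlier and cannot increase the gap between the two retained visits. The paper's proof is the same argument with cosmetic differences (it appends the extra visits in $k-2$ rounds of $1,\ldots,n$ and describes the contraction as re-placing first/second visits at the earliest unoccupied positions), so there is nothing to add.
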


\begin{proof}
    We will reduce $\twoV$ to $\varkV$ ($k\geq 2$). Let $D=\langle d_1,\ldots,d_n\rangle $ be a $\twoV$ instance. We build a $\varkV$ instance $G$ with deadlines $g_{ij}$, $i\in [n]$, $j\in [k]$. We define
    \[g_{ij}=\begin{cases}
        d_i,\ j \leq 2\\
        3n,\ j > 2
    \end{cases}.\]

    We now prove that $D$ and $G$ are equivalent.
    
    ($\Rightarrow$) If there is a feasible schedule for $D$, then we can extend it by visiting all nodes $1,\ldots,n$ in this order $k-2$ times in a row, after its end. Since $g_{ij}=3n$ for $j>2$, the modified schedule is feasible for $G$.

    ($\Leftarrow$) Suppose there is a feasible schedule $F$ for $G$. We will create another schedule $F'$ based on $F$ in the following manner, and prove that it is feasible for $D$. Scan $F$ from start to end. When a first or second visit of any node is encountered, add a visit of that node to $F'$ in its earliest unoccupied position. Thus, the $2n$ positions of $F'$ contain exactly two visits of each node, in the same order in which they were encountered in $F$. The following properties hold for every $i\in [n]$:
    \begin{itemize}
        \item The first visit of $i$ in $F'$ occurs no later than in $F$.
        \item The second visit of $i$ in $F'$ is at least as close to its respective first visit as it was in $F$.
    \end{itemize}
    Hence, all first and second visits have been placed feasibly in $F'$, which implies that $F'$ is a feasible schedule for $D$.

    We infer that $D$ has a feasible schedule if and only if $G$ has a feasible schedule, which completes the reduction.
\end{proof}

\subsection{\PWS\ with varying deadlines}

    Similarly, we define a generalization of $\PWS$ in which the deadline of each node changes once after a given threshold of visits.

\begin{definition}[$\TPWS$]
    Given $2n$ positive integers (deadlines) $d_{ij}$, $i\in [n],\ j \in \{1,2\}$, and $n$ positive integers (thresholds) $t_i$, $i\in [n]$, the \TPWS\ problem asks whether there exists an infinite schedule $p_1, p_2, \ldots$ , where $p_j \in [n]$ for $j\in \mathbb{N}$, such that for all $i\in [n]$:
    \begin{itemize}
        \item Up until the $t_i$-th occurrence of $i$, any $d_{i1}$ consecutive entries contain at least one occurrence of~$i$.
        \item After the $t_i$-th occurrence of $i$, any $d_{i2}$ consecutive entries contain at least one occurrence of~$i$.
    \end{itemize}
\end{definition}

\begin{theorem}\label{thrm:hardness_TPWS}
    $\TPWS$ is strongly NP-hard, even when $t_i = t_j,\, \forall i,j\in [n]$ (even with explicit input).
\end{theorem}

\begin{proof}
    The proof is a straightforward modification of the proof of Theorem~\ref{thrm:hardness_generalized}. We reduce \twoV\ to \TPWS. 

    Let $D=\langle d_1,\ldots,d_n\rangle $ be a $\twoV$ instance. We construct a \TPWS\ instance $G$ as follows. We use thresholds $t_i=2,\, \forall i\in [n]$, and deadlines

    \[g_{ij}=\begin{cases}
        d_i,\ j = 1\\
        3n,\ j = 2
    \end{cases}.\]

    We now prove that $D$ and $G$ are equivalent.
    
    ($\Rightarrow$) If there is a feasible schedule for $D$, then we can extend it by periodically visiting all nodes $1,\ldots,n$ in this order, after its end. It follows that this modified schedule is feasible for $G$, since $t_i=2,\, g_{i2}=3n,\, \forall i\in [n]$.

    ($\Leftarrow$) Suppose there is a feasible (infinite) schedule for $G$. Then, one can create a feasible (finite) schedule for $D$ by removing all visits from $G$ except the first two visits to each node. This preserves feasibility (cf. the proof of Theorem~\ref{thrm:hardness_generalized}).

    Thus, $D$ has a feasible schedule if and only if $G$ has a feasible schedule, which concludes the reduction.
\end{proof}


We remark that membership in NP for \TPWS\ is open. On the other hand, it could be the case that the problem is PSPACE-complete. Membership in PSPACE for \TPWS\ can be proven as follows, by modifying the respective proof for \PWS~\cite{Holte_Pinwheel}.

\begin{theorem}
    \TPWS\ is in $\mathrm{PSPACE}$.
\end{theorem}

\begin{proof}
    Let $I=(\{d_{ij}\},\, \{t_i\}, i\in [n], j\in \{1,2\})$ be a \TPWS\ instance.
    For $i\in [n]$ define $f(i)=d_{i1}\cdot t_i$ and let $i^* \in [n]$ be the node that maximizes $f(i)$. Observe that, for any feasible schedule of $I$, each node $i\in [n]$ has to be visited at least $t_i$ times up until time unit $f(i^*)$; intuitively, this means that the deadlines of all nodes will have switched to their second value after time unit $f(i^*)$, in any feasible schedule.

    For every time slot $p\in \mathbb{N}$ of an infinite schedule $S$, we define a \emph{state vector} $V(p)\in \mathbb{N}^{n}$, with the $i$-th element of $V(p)$ representing the remaining time for the deadline of node $i$ to expire in $S$. More formally, define $V(0)=[d_{11},\ldots,d_{n1}]$ and then, for all $p\in \mathbb{N}$, produce $V(p)$ from $V(p-1)$ by decreasing all elements of $V(p-1)$ by $1$, except the one corresponding to the node $i$ visited in the $p$-th time slot of $S$, which is instead reset to~$d_{i1}$ if $i$ has not yet been visited $t_i$ times, or to $d_{i2}$ otherwise. Observe that $S$ is feasible if and only if all vectors $V(p),\, p\in \mathbb{N}$, contain only strictly positive elements (informally: no deadline ever expires).

    Let $m = \prod_{i\in [n]} d_{i2}$ be the product of the \emph{second} deadlines in $I$. We will now prove that it suffices to only check the existence of a schedule respecting the deadlines up to a certain time slot (without needing to consider infinite time slots).

    \begin{claim}\label{claim:TPWS}
        If there exists a schedule with $V(p)$ containing only strictly positive elements for all $p\in [f(i^*)+m+1]$, then there exists an infinite schedule.
    \end{claim}

    \begin{claimproof}
        Let $S$ be a schedule such that all vectors $V(p)$, $p\in [f(i^*)+m+1]$, contain only strictly positive elements.
        Since, as explained above, all nodes have switched to their second deadline after $f(i^*)$ time units, there are at most $m$ distinct state vectors (with strictly positive elements) for time slots after $f(i^*)$. Hence, by the pigeonhole principle, there exist time slots $p,\, q \in [f(i^*),\, f(i^*)+m+1]$, $p<q$, such that $V(p)=V(q)$ in $S$. By the definition of the state vector it follows that we can obtain a periodic infinite schedule by copying $S$ up to position $q-1$ and then infinitely repeating the sub-schedule of $S$ consisting of time slots from $p$ up to $q-1$.
    \end{claimproof}

    With Claim~\ref{claim:TPWS}, we can solve \TPWS\ in polynomial space. We produce all feasible schedules by taking all choices for each time slot, while only storing how many times each node has been visited in the schedule, along with the current state vector and the number of time units that have passed. By Claim~\ref{claim:TPWS}, each possible schedule only needs to be checked up to time unit $f(i^*)+m+1$: $I$ admits a schedule if at least one schedule has state vectors with strictly positive elements up until time unit $f(i^*)+m+1$, otherwise $I$ does not admit a schedule (since any infinite schedule must have $V(p)$ with strictly positive elements for all $p\in \mathbb{N}$). This procedure uses polynomial space, as we only have to store the aforementioned values for the schedule that is currently explored.
\end{proof}

\subsection{An optimization version of \PWS}

As a final corollary of the hardness of \twoV, we obtain that an optimization version of \PWS\ seeking to maximize the times that each node is visited is strongly NP-hard. Note that the answer to the following problem is $\infty$ for any yes-instance of \PWS; however, the problem could be of interest for no-instances of \PWS.

\begin{definition}[\MVPWS]
    Given a non-decreasing sequence of $n$ positive integers (deadlines) $d_i$, $i\in [n]$, the $\MVPWS$ problem asks to find the maximum number $v \in \mathbb{N}\cup \{\infty\}$, such that there exists a schedule visiting every node $i\in [n]$ at least $v$ times each, with the constraint that every occurrence of $i$ is at most $d_i$ time units away from the previous one (or from the beginning of the schedule, if it is the first one).
\end{definition}

\begin{theorem}
    \MVPWS\ is strongly NP-hard.
\end{theorem}

\begin{proof}
    It is clear that, for all $k\in \mathbb{N}$, a \kV\ schedule exists if and only if the answer $v$ to the respective \MVPWS\ instance is not smaller than $k$. Hence, \twoV\ reduces to \MVPWS\ in polynomial time.
\end{proof}



\section{Conclusion}

    




    
    Our work highlights that finite versions of $\PWS$ are strongly NP-complete, also implying strong NP-hardness for a natural generalization of the infinite version, which we call $\TPWS$. Thus, we show that \textsc{Pinwheel Scheduling} becomes strongly NP-hard when the deadline of each task is allowed to change once during the schedule, which may provide insight towards settling the long-standing open questions concerning the time complexity of the problem. 

    Our findings lead to several interesting directions for future work. One such direction is to explore special cases in which $\twoV$ is tractable. We believe there may exist an FPT algorithm parameterized by the \emph{number of numbers}, which is a common parameterization for numerical matching problems~\cite{Number_of_numbers}; we already showed that the problem is tractable when there are up to two distinct numbers. Also, since $\twoV$ is in P for distinct deadlines, there may exist an FPT algorithm parameterized by some parameter related to the multiplicity of the input elements. Note, however, that further work has proven that \twoV\ is strongly NP-complete even for maximum multiplicity $2$~\cite{ICALP_kVisits}.   
    Another natural direction for future research is to study the complexity of the $k$-\textsc{Visits} problem for $k>2$. We conjecture that $3$-\textsc{Visits} is strongly NP-complete even for distinct deadlines. 

    The $\kV$ problem that we studied here can be modeled by an agent traversing a complete graph with deadlines. Equivalently, the problem can be modeled by a star graph where the center is a special node with no deadline. Hence, a natural direction for future work is to study the $\kV$ problem in other graph classes, e.g. lines or comb graphs. This could help identify additional tractable instances for deadline-based problems.

    Our results and the open questions arising from this work may contribute to the ongoing effort to understand the complexity of $\PWS$. It still remains open whether $\PWS$ is strongly NP-hard, whether it is contained in NP, and whether it is PSPACE-complete. By analyzing $\kV$ and $\TPWS$, we provide new insights that could help address these questions.

\subsubsection*{Acknowledgements}

We are grateful to Shantanu Das for valuable discussions and insights regarding the \kV\ problem. We would also like to thank the anonymous reviewer in SODA 2026 who suggested a concise proof for Lemma~\ref{lem:SM_small_target}.

\medskip

\noindent
This work has been partially supported by project MIS 5154714 of the National Recovery and Resilience Plan Greece 2.0 funded by the European Union under the NextGenerationEU Program.







\bibliographystyle{splncs04}
\bibliography{bibliography}

\end{document}